\documentclass[11pt]{article}

\usepackage[margin = 2.50cm]{geometry}
\usepackage{microtype}
\usepackage{epsfig}
\usepackage{microtype}
\usepackage{wrapfig}
\usepackage{graphicx}
\usepackage{epsfig,amsmath,amsfonts,amssymb,amsthm,mathrsfs,ifpdf}
\usepackage{indentfirst,relsize}
\usepackage[numbers]{natbib}
\usepackage{setspace}
\usepackage{enumerate}
\usepackage{latexsym}
\usepackage{stackrel}
\usepackage[all]{xy}
\usepackage[usenames,dvipsnames]{pstricks}
\usepackage[linesnumbered,vlined,ruled]{algorithm2e}
\usepackage{pst-grad}
\usepackage{pst-plot}
\usepackage{xspace}
\usepackage{hyperref}

\usepackage{cleveref}

\newcommand{\size}[1]{\left| #1 \right|}

\newcommand{\E}{\mathbb{E}}
\newcommand{\remove}[1]{}

\newcommand{\R}{\mathbb{R}}
\newcommand{\N}{\mathbb{N}}

\newcommand{\cA}{\mathcal{A}}

\newcommand{\cC}{\mathcal{C}}

\newcommand{\cP}{\mathcal{P}}

\newcommand{\cG}{\mathcal{G}}

\newcommand{\Oh}{\mathcal{O}}
\newcommand{\cQ}{\mathcal{Q}}

\newcommand{\eps}{\epsilon}
\newcommand{\pr}{\mathbb{P}}

\theoremstyle{plain}
\newtheorem{theo}{Theorem}
\newtheorem{lem}[theo]{Lemma}

\theoremstyle{definition}
\newtheorem{defi}[theo]{Definition}
\newtheorem{rem}{Remark}
\newtheorem{obs}[theo]{Observation}

\newcommand{\iporacle}[1]{$\mbox{{\sc IP}}_{#1}$\xspace }

\newcommand{\mincut}{{\sc MinCut}\xspace}
\newcommand{\cutvalue}{{\mbox{\sc Cut}(G)} \xspace}
\newcommand{\cutvalueH}{{\mbox{\sc Cut}(H)} \xspace}
\newcommand{\mincutest}{{\sc MinCutEst}\xspace}
\newcommand{\mincutfind}{{\sc MinCutFind}\xspace}
\newcommand{\cutquery}{{\sc Cut Query}\xspace}
\newcommand{\demand}{{\sc Demand Query}\xspace}

\newcommand{\FindIntersection}{\textnormal{FIND-INT}}
\newcommand{\Intersection}{\text{INT}}
\newcommand{\tn}[1]{\textnormal{#1}}
\newcommand{\zone}{\{0,1\}}

\newcommand{\defproblem}[3]{
  \vspace{1mm}
\noindent\fbox{
  \begin{minipage}{0.96\textwidth}
  \begin{tabular*}{\textwidth}{@{\extracolsep{\fill}}lr} #1 \\ \end{tabular*}
  {\bf{Input:}} #2  \\
  {\bf{Output:}} #3
  \end{minipage}
  }
  \vspace{1mm}
}

\title{Query Complexity of Global Minimum Cut}

\author{
Arijit Bishnu
\and 
Arijit Ghosh
\and 
Gopinath Mishra
\and 
Manaswi Paraashar
}

\date{}

\begin{document}

\maketitle

\thispagestyle{empty}

\begin{abstract}
In this work, we resolve the query complexity of global minimum cut problem for a graph by designing a randomized algorithm for approximating the size of minimum cut in a graph, where the graph can be accessed through local queries like {\sc Degree}, {\sc Neighbor}, and {\sc Adjacency} queries.

Given $\epsilon \in (0,1)$, the algorithm with high probability outputs an estimate $\hat{t}$ satisfying the following
$(1-\epsilon) t \leq \hat{t} \leq (1+\epsilon) t$,
where $m$ is the number of edges in the graph and
$t$ is the size of minimum cut in the graph. The expected number of local queries used by our algorithm is $\min\left\{m+n,\frac{m}{t}\right\}\mbox{poly}\left(\log n,\frac{1}{\epsilon}\right)$ where $n$ is the number of vertices in the graph. Eden and Rosenbaum showed that $\Omega(m/t)$ many local queries are required for approximating the size of minimum cut in graphs. These two results together resolve the query complexity of the problem of estimating the size of minimum cut in graphs using local queries. 

Building on the lower bound of Eden and Rosenbaum, we show that, for all $t \in \mathbb{N}$, $\Omega(m)$ local queries are required to decide if the size of the minimum cut in the graph is $t$ or $t-2$. Also, we show that, for any $t \in \mathbb{N}$, $\Omega(m)$ local queries are required to find all the minimum cut edges even if it is promised that the input graph has
a minimum cut of size $t$. Both of our lower bound results are randomized, and hold even if we can make {\sc  Random Edge} query apart from local queries.

\remove{
\noindent
Given an $n \times n$ matrix $A$ having unknown non-negative entries, the \emph{inner product oracle}, denoted as \iporacle{}, takes as inputs the index of a specified row (or a column) of $A$ and a vector $v \in \R^{n}$, with non-negative entries, and returns their inner product. $A$ can represent, among other things, the adjacency matrix of a graph $G=(V,E)$. The goal of this paper is twofold -- (i) solving the bilinear form estimation problem and sampling entries of $A$ using \iporacle{}; as an application, we can also solve weighted edge estimation (alluded to in ~\cite{GoldreichR08}), and weighted edge sampling and (ii) showing a clear separation in the power of \iporacle{} in solving edge estimation and sampling problems in induced sub-graphs vis-a-vis local queries (degree, neighbor and edge existence) aided with random edge query; we establish this separation result by showing lower bounds on query complexity for estimation and sampling in induced sub-graphs, including bipartitions, coupled with algorithmic results showing how \iporacle{} can circumvent these lower bounds. In short, \iporacle{} can solve problems  with query complexity that improves on the lower bound of these problems using local and random edge queries. 

Using \iporacle{}, we solve the following two classical problems in linear algebra and randomized algorithms, respectively: 
\begin{description}
    \item[Problem-1.]
     Given two vectors ${\bf x},\, {\bf y} \in \R^{n}$ with non-negative entries, estimate the value of ${\bf y}^{T}A\bf{x}$.
    \item[Problem-2.]
        Design a {\em sampler} of $[n]\times [n]$ with respect to the matrix $A$ such that $(i,j)\in [n] \times [n]$ is sampled approximately with probability $\frac{A_{ij}}{\sum_{1\leq k , l \leq n} A_{kl}}$. 
\end{description}
For both of these problems, we show that \iporacle{} can be used to design query algorithms with sub-linear query complexity. We also show matching lower bounds for both of these problems. As an application of our above mentioned results, we give for the first time efficient algorithms for {\em weighted edge estimation} problem and {\em almost uniformly sampling edges} of weighted graphs with only \iporacle{} query access.

We look at the following two problems in induced sub-graphs:
\begin{description}
\item[Problem-3.]
Given a subset $S$ of the vertex set of a graph $G=(V,E)$ accessed using query oracles. Approximately estimate and almost uniformly sample the edges in $G_{S}$, the graph induced on the graph $G$ by the subset $S$.

\item[Problem-4.]
Given two disjoint subsets $A, \, B$ of the vertex set of a graph $G$ accessed using query oracles. Approximately estimate and almost uniformly sample the edges in $E_{A,B}:= \left\{ \{i,j\}\,:\, i\in A~\mbox{and}~j\in B\right\}$, i.e., the edge set in $E$ induced by the bipartition $(A, \, B)$.
\end{description}
For the above two problems, the set of vertices that generate the induced sub-graphs is a part of the input; the query oracle has only access to $G$. The algorithm has access to $G$ via \iporacle{} query to the adjacency matrix of $G$. For both of these problems, we start by showing lower bounds using local queries and random edge query. Next, we show that \iporacle{} can solve them using sub-linear queries, and thus has a clear separation in power from the local queries. This establishes \iporacle{} as a useful query oracle to study and will find other uses in future. 
}

\end{abstract}


\section{Introduction}
\label{sec:intro}
\noindent
Global minimum cut (denoted \mincut) for a connected, unweighted, undirected and simple graph $G=(V,E)$, $\size{V}=n$ and $\size{E}=m$, is a partition of the vertex set $V$ into two sets $S$ and $V \setminus S$ such that the number of edges between $S$ and $V \setminus S$ is minimized. Let \cutvalue denote this edge set corresponding to a minimum cut in $G$, and $t$ denote $\size{\mbox{{\sc Cut}}(G)}$. The problem is so fundamental that researchers keep coming back to it again and again across different models~\cite{McGregor14,KargerS93,Karger93,KawarabayashiT19,MukhopadhyayN20,AhnGM12}. Fundamental graph parameter estimation problems, like estimation of the number of edges~\cite{Feige06,GoldreichR08}, triangles~\cite{EdenLRS15}, cliques~\cite{EdenRS18}, stars~\cite{GonenRS11}, etc.\ have been solved in the local and bounded query models~\cite{GoldreichGR98,GoldreichR08,KaufmanKR04}. 
Estimation of the size of \mincut is also in the league of such fundamental problems to be solved in the model of local queries. 

In property testing~\cite{Goldreich17}, a graph can be accessed at different granularities --- the query oracle can answer properties about graph that are local or global in nature. Local queries involve the relation of a vertex with its immediate neighborhood, whereas, global queries involve the relation between sets of vertices.
Recently using a global query, named \cutquery ~\cite{RubinsteinSW18}, the problem of estimating and finding \mincut was solved, but the problem of estimating or finding \mincut using local queries has not been solved. The fundamental contribution of our work is to resolve the query complexity of \mincut using local queries. We resolve both estimating and finding variants of the problem. To start with, we formally define the query oracle models we would be needing for discussions that follow. 

\paragraph*{The query oracle models.} We start with the most ubiquitous local queries and the random edge query for a graph $G=(V,E)$ where the vertex set $V$ is known but the edge set $E$ is unknown. 
\begin{itemize}
\item Local Query
\begin{itemize}
\item {\sc Degree} query: given $u \in V$, the oracle reports the degree of $u$ in $V$;
\item {\sc Neighbor} query: given $u \in V$, the oracle reports the $i$-th neighbor of $u$, if it exists; otherwise, the oracle reports $\perp$;
\item {\sc Adjacency} query:  given $u, \, v \in V$, the oracle reports whether $\{u,v\} \in E$.
\end{itemize}
 \item {\sc Random Edge} query: The query outputs an uniformly random edge of $G$.
\end{itemize}
\remove{
\paragraph*{The random edge query.} Random edge query for a graph $G$ is defined as
\begin{itemize}
    \item {\sc Random edge} query: The query outputs an uniformly random edge of $G$.
\end{itemize}
}
Apart from the above local queries, in the last few years, researchers have also used the {\sc Random Edge} query~\cite{AliakbarpourBGP18,AssadiKK19}. Notice that the randomness will be over the probability space of all edges, and hence, a random edge query is not a local query. We use this query in conjunction with local queries only for lower bound purposes. The other query oracle relevant for our discussion will be a \emph{global query} called the \cutquery proposed by Rubinstein et al.~\cite{RubinsteinSW18} that was motivated by submodular function minimization. The query takes as input a subset $S$ of the vertex set $V$ and returns the size of the cut between $S$ and $V \setminus S$ in the graph $G$.

\paragraph*{Prologue.}
Our motivation for this work is twofold --- \mincut is a fundamental graph estimation problem that needs to be solved in the local query oracle model and the lower bound of Eden and Rosenbaum~\cite{ER18} who extended the seminal work of Blais et al.~\cite{BlaisBM11} to develop a technique for proving query complexity lower bounds for graph properties via reductions from communication complexity. Using those techniques, for graphs that can be accessed by only local queries like {\sc Degree}, {\sc Neighbor}, {\sc Adjacency} and {\sc Random Edge}, 
Eden and Rosenbaum~\cite{ER18} showed that \mincut has a lower bound of $\Omega(m/t)$, where $m$ and $t$ are the number of edges and the size of the minimum cut, respectively, in the graph. In this work, we show that the query complexity of estimating \mincut using local queries only (and not {\sc Random Edge}) is $\widetilde{\Theta}\left( m/t \right)$, thus proving a matching upper bound. For designing the query algorithm for \mincut that matches the lower bound, we revisit the fundamental work of Karger~\cite{Karger93}. The power of query oracles allows us to use an ingenious coupling of a guessing scheme with Karger's result of upper bounding the number of cuts of a particular size, to come up with the algorithm. 

Prior to our work, no local query based algorithm has been developed for \mincut. But it was Rubinstein et al.\ ~\cite{RubinsteinSW18} who studied \mincut for the first time using \cutquery, a global query. They showed that there exists a randomized algorithm for finding a \mincut in $G$ using $\widetilde{\mathcal{O}}(n)$\footnote{$\widetilde{\mathcal{O}}(n)$ hides polylogarithmic terms in $n$.} many \cutquery.
Graur et al.~\cite{graur2019new} showed a matching (deterministic) lower bound for finding \mincut using \cutquery. 
\remove{
This leaves us in a piquant situation. We have local queries solving \mincut optimally and \cutquery, a global query, also solving it optimally with a deep chasm in the query complexities. Thus, there is a clear and huge separation in the power of the oracles. Is there any other query oracle to bridge the gap? We turn our attention to the \demand of Nisan~\cite{}, which is also a global query to solve \mincut.  
}

\remove{
\paragraph*{Demand query} A {\sc demand query}~\cite{} takes as input a vertex $v$ and an order on a set of vertices $U = (u_1, \ldots, u_n)$, $v \notin U$,  
and returns the index of the first vertex $i$ in the order such that there is an edge $(v, u_i)$ in the graph, or $0$ if none exists. 

A closely related oracle is the {\sc or query}~\cite{}, where the input is a vertex $v$ and a set $S$ of vertices, $(v \notin S)$, and 
returns an answer whether there exists an edge between $v$ and some
vertex in $S$. {\sc or query} is clearly weaker than the demand query, but binary search can be used to simulate demand queries using $O(\log n)$ {\sc OR queries}. 
} 

\paragraph*{Problem statements and results.}  \remove{Global minimum cut (\mincut) for a connected, unweighted, undirected graph $G=(V,E)$ is a partition of the vertex set $V$ into two sets $S$ and $V \setminus S$ such that the number of edges between $S$ and $V \setminus S$ is minimized.}  We focus on two problems in this work.
\remove{
\begin{description}
\item[Minimum Cut Estimation (\mincutest):] The input is a local query access to a graph $G$ and a parameter $\eps \in (0,1)$. The problem is to estimate $\size{\cutvalue}$, i.e., report a value of the cut that is within $(1+\eps) \cdot \size{\cutvalue}$.
\item[Minimum Cut Find (\mincutfind):] Here the problem is to explicitly find the set \cutvalue. \textcolor{blue}{finding $S$ is also hard.}
\end{description}
}

\defproblem{Minimum Cut Estimation}{A parameter $\eps \in (0,1)$, and access to an unknown graph $G$ via local queries}{An $(1 \pm \eps)$-approximation to $\size{\cutvalue}$.}\\

\defproblem{Minimum Cut Finding}{Access to an unknown graph $G$ via local queries}{Find a set \cutvalue.}

Our results are the following.

\begin{theo}{\bf (Minimum cut estimation using local queries)} 
\label{thm-mincut-estimation-local-queries}
There exists an algorithm, with {\sc Degree} and {\sc Neighbor} query access to an unknown graph $G = (V, E)$, that solves the minimum cut estimation problem with high probability. The expected number of queries used by the algorithm is 
$$
    \min\left\{m+n,\frac{m}{t}\right\}\mathrm{poly}\left(\log n,\frac{1}{\eps}\right).
$$
\end{theo}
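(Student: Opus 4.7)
The strategy is to reduce minimum-cut estimation on $G$ to exact minimum-cut computation on a much sparser random subgraph $G_p$ produced by Karger's edge-sampling theorem, to simulate this sampling using only {\sc Degree} and {\sc Neighbor} queries, and to perform a geometric search over guesses of $t$.

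First, for any $p \in (0,1]$, the random subgraph $G_p$ in which each edge of $G$ is independently retained with probability $p$ can be produced in expected $O(n+pm)$ local queries: for every vertex $u$ issue one {\sc Degree}$(u)$ query to learn $d(u)$, and for each $i \in [d(u)]$ flip a $p$-biased coin---keyed by a hash of the canonical edge name so that the two endpoints produce consistent outcomes---spending a {\sc Neighbor}$(u,i)$ query only on success. By Karger's sampling theorem, whose proof rests on the bound that $G$ has at most $n^{2\alpha}$ cuts of value at most $\alpha t$, once $p \ge c\log n/(t\eps^2)$ for a sufficiently large constant $c$, every cut of $G_p$ is within a $(1\pm\eps)$ factor of $p$ times the corresponding cut of $G$; hence $\mathrm{mincut}(G_p)/p$ is a $(1\pm\eps)$-approximation to $t$. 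The minimum cut of the stored subgraph is then found in memory by any polynomial-time algorithm at no further query cost.

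Since $t$ is unknown, I guess it geometrically. First compute $d_{\min}$ via $n$ {\sc Degree} queries; since $t \le d_{\min} \le 2m/n$ we get $n \le 2m/t$, so this cost is within the $\widetilde{O}(m/t)$ budget. Then initialise $\hat{t} = d_{\min}$ and halve $\hat{t}$ each round, computing $\tilde{t}(\hat{t}) := \mathrm{mincut}(G_{p(\hat{t})})/p(\hat{t})$ with $p(\hat{t}) = \min\{1, c\log n/(\hat{t}\eps^2)\}$, and halting as soon as a suitable consistency check passes (for instance, $\hat{t} \le 2\tilde{t}(\hat{t})$). Once $\hat{t}$ enters the window $[t,2t]$, Karger's theorem applies (with a constant-factor slack absorbed into $c$) and the final estimate is $(1\pm O(\eps))t$. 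Summed over the $O(\log n)$ rounds, the $O(n)$ per-round overhead contributes $\widetilde{O}(n)$, and the $O(m\log n/(\hat{t}\eps^2))$ per-round sampling cost forms a geometric series dominated by the final $\hat{t} = \Theta(t)$, giving a total of $\widetilde{O}(n+m/t)\cdot\mathrm{poly}(1/\eps) = \widetilde{O}(m/t)\cdot\mathrm{poly}(1/\eps)$. The $m+n$ branch of the bound is secured by aborting into a full graph read whenever the running cost threatens to exceed $O(m+n)$, which also cleanly handles pathological small-$t$ cases (including the disconnected case $t=0$).

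The delicate part is the regime $\hat{t} \gg t$, where $p(\hat{t})$ is below Karger's threshold and the min-cut edges may be largely absent from $G_p$, so $\tilde{t}(\hat{t})$ can be arbitrarily small and the halving might halt prematurely. I plan to counter this by taking $\widetilde{O}(1)$ independent samples at each round and aggregating them (e.g., by a median), and by using a two-step consistency check across consecutive halvings, so that with high probability the search survives every $\hat{t} > 2t$ and terminates only once $\hat{t}$ lies in $[t,2t]$. The remaining implementation detail---arranging the $p$-biased coin tosses to be consistent across endpoints so that each edge is decided exactly once---is straightforward via edge hashing.
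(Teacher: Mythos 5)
Your proposal follows the same overall strategy as the paper: subsample edges at rate $p = \Theta\left(\log n / (\eps^2 \hat{t})\right)$ for a guess $\hat{t}$, invoke Karger's cut-counting bound to argue that all cuts are simultaneously preserved once $\hat{t} \le t$, and geometrically search for the right scale. Two concrete issues are worth flagging.

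First, the sampling primitive as described is circular. You key the $p$-biased coin on ``a hash of the canonical edge name,'' but $\{u,v\}$ only becomes known by issuing {\sc Neighbor}$(u,i)$, which you propose to do only when the coin succeeds. The paper avoids this by flipping an independent $q$-biased coin per directed slot $(u,i)$ with $q = 1 - \sqrt{1-p}$, so an edge is retained iff at least one of its two slots is visited, which happens with probability exactly $1-(1-q)^2 = p$, and the expected query count is $2qm \le 2pm$. Your scheme needs this (or an equivalent) fix.

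Second, your stopping rule is a genuine alternative to the paper's, but your stated worry about it points in the wrong direction. The paper's {\sc Verify-Guess} uses frequent disconnection of $G_p$ to certify $\hat{t} \gtrsim t \log n / \eps^2$; because that signal has a $\log n / \eps^2$-wide gray zone, after the first non-rejection the paper must overshoot by dividing $\hat{t}$ by $\kappa = \Theta(\log n / \eps^2)$ before trusting the reported $\tilde{t}$. Your check $\hat{t} \le 2\tilde{t}(\hat{t})$ is tighter: a one-sided Chernoff/upper-tail bound on the $\mathrm{Binomial}(t,p)$ number of surviving min-cut edges shows $\tilde{t}(\hat{t}) \le (1+\eps)t$ with high probability for every $\hat{t}$, so the check fails whenever $\hat{t} > 2(1+\eps)t$, while Karger's theorem (with $c$ enlarged so it applies up to $\hat{t} = O(t)$) guarantees it passes once $\hat{t} \le 2(1-\eps)t$; hence the search halts at $\hat{t}^{*} \le 2(1+\eps)t$ and the returned $\tilde{t}$ is already a $(1\pm O(\eps))$-approximation with no overshoot step. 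The event you must rule out is $\tilde{t}(\hat{t})$ fluctuating \emph{upward} so the check passes too early; what you actually wrote --- that $\tilde{t}(\hat{t})$ ``can be arbitrarily small and the halving might halt prematurely'' --- is not a failure mode, since a small $\tilde{t}$ makes $\hat{t} \le 2\tilde{t}$ false and the halving simply continues. You should replace the vague median/two-step patch with the upper-tail argument; a single sample per round already suffices (a union bound over the $O(\log n)$ rounds is enough). With those two repairs your route is correct and marginally leaner than the paper's connectivity-plus-overshoot design; the budget accounting (the $n \le 2m/t$ observation, the fallback to a full read for small $\hat t$, and the $t=0$ case) lines up with the paper's.
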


Notice that Theorem~\ref{thm-mincut-estimation-local-queries} coupled with the matching lower bound result of Eden and Rosenbaum~\cite{ER18} closes the \mincut estimation problem in graphs using local queries.
\medskip

Building on the lower bound construction of Eden and Rosenbaum~\cite{ER18}, we show that no nontrivial query algorithm exists for finding a minimum cut or even estimating the exact size of a minimum cut in graphs.

\begin{theo}{\bf (Lower bound for minimum cut finding, i.e., \cutvalue)}
\label{theo-lower-bound-exact-cut}
   Let $m,n,t \in \N$ with $t \leq n-1$ and $ 2nt \leq m \leq {n \choose 2}$. Any algorithm that has access to {\sc Degree}, {\sc Neighbor}, {\sc Adjacency} and {\sc Random Edge} queries to an unknown graph $G = (V,E)$ must make at least $\Omega(m)$ queries in order to find all the edges in a minimum cut of $G$ with probability $2/3$.
   \remove{
   with $\size{V(G)} = n$, $\size{E(G)} = m$ and $\size{\mbox{\sc Cut}(G)} = t$, }
\end{theo}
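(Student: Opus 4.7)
The plan is to lift the ``needle in a haystack'' hardness underlying the Eden--Rosenbaum $\Omega(m/t)$ estimation bound to a direct-sum bound of $\Omega(m)$ for \emph{finding} the entire $\cutvalue$. The intuition is that an estimator can succeed by extracting one aggregate bit of information, whereas a finding algorithm must separately localize each of the $t$ cut edges, each of which lives in its own haystack of $\Theta(m/t)$ candidates.

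First, we extract from the Eden--Rosenbaum construction a hard ``gadget'' $\cG$ on $\Theta(n)$ vertices and $\Theta(m/t)$ edges inside which a distinguished edge is planted according to a distribution that is statistically indistinguishable from the other edges under {\sc Degree}, {\sc Neighbor}, {\sc Adjacency}, and {\sc Random Edge} queries unless one spends $\Omega(m/t)$ queries on $\cG$. In slightly more detail, the Eden--Rosenbaum lower bound construction already provides a graph distribution in which identifying the edge belonging to the minimum cut demands $\Omega(m/t)$ queries with probability $2/3$; this gadget is used as our atomic building block.

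Next, we build a composite graph $G$ on $n$ vertices and $m$ edges by taking $t$ essentially vertex-disjoint copies $\cG_1, \ldots, \cG_t$ of the gadget and gluing them along a publicly known, highly connected backbone, chosen so that (i) the only cut of size at most $t$ is the union of the $t$ distinguished edges, forcing $\size{\cutvalue} = t$, and (ii) every local query to $G$ reduces to a single query inside one $\cG_i$ plus a deterministic backbone contribution, while a {\sc Random Edge} query can be simulated by first drawing $i \in [t]$ uniformly (each $\cG_i$ carries the same number of edges) and then invoking the {\sc Random Edge} oracle inside $\cG_i$. The parameter constraints $t \leq n-1$ and $2nt \leq m \leq \binom{n}{2}$ leave enough room to execute this construction while preserving the per-gadget edge count $\Theta(m/t)$.

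The final step is a direct-sum argument. Any algorithm that outputs the whole $\cutvalue$ with probability at least $2/3$ must, by marginalization, identify the distinguished edge of every $\cG_i$ with probability at least $2/3$. The simulation above converts a $q$-query algorithm on $G$ into a tuple of algorithms on $\cG_1, \ldots, \cG_t$ with total budget $q_1 + \cdots + q_t = O(q)$; if $q = o(m)$, averaging forces some $q_{i^\star} = o(m/t)$, contradicting the Eden--Rosenbaum hardness on $\cG_{i^\star}$. The main obstacle will be making the averaging step rigorous in the adaptive setting, since the algorithm can deliberately skew its queries across gadgets; we plan to handle this by exploiting the independence of the hidden edges across $\cG_1, \ldots, \cG_t$, so that responses to queries inside $\cG_j$ for $j \neq i$ carry no information about $\cG_i$, effectively decoupling the $t$ subproblems and delivering the claimed $\Omega(m)$ lower bound.
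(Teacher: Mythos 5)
Your plan takes a genuinely different route from the paper, and it has gaps that stop it from being a proof as written. The paper does not decompose the graph into $t$ vertex-disjoint gadgets; instead it builds a \emph{single} hard instance $G_{({\bf x},{\bf y})}$ on a shared set of $4s=\Theta(\sqrt m)$ special vertices in which all $t/2$ hidden intersections live in the same $s\times s$ index grid, and it reduces directly from {\sc Find-$t/2$-Intersection} on $N=s^2=\Theta(m)$ bits, whose $\Omega(N)$ communication lower bound (Lemma~\ref{theo:finding-intersection-hard}) already encapsulates the ``find all $t/2$ needles'' hardness. In other words, the direct-sum intuition you want is pushed into the communication-complexity black box, where it is a known theorem, rather than re-proved in the query model.

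Two concrete problems with your version. First, the vertex budget does not close: a gadget with $\Theta(m/t)$ edges built along Eden--Rosenbaum lines needs $\Theta(\sqrt{m/t})$ ``active'' vertices, so $t$ essentially disjoint copies need $\Theta(t\sqrt{m/t})=\Theta(\sqrt{tm})$ vertices, which exceeds $n$ once $t$ is larger than roughly $\sqrt n$, whereas the theorem allows $t$ up to $\Theta(n)$; sharing vertices to fix this destroys the independence you rely on. Second, the step you yourself flag as ``the main obstacle'' --- turning ``each gadget individually needs $\Omega(m/t)$ queries'' into ``the total is $\Omega(m)$'' for an adaptive algorithm that can skew its query budget across gadgets --- is exactly the hard part and is not supplied; a naive entropy count gives only $\Omega(t\log(m/t)/\log n)$, far below $\Omega(m)$, and the ``expected queries to gadget $i$ is $\Omega(m/t)$, sum over $i$'' argument requires a per-gadget average-case lower bound that survives conditioning on the other gadgets and degraded success probability, none of which is established. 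Unless you either (a) prove a query-model direct-sum lemma tailored to this gadget under {\sc Degree}/{\sc Neighbor}/{\sc Adjacency}/{\sc Random Edge} queries, or (b) switch, as the paper does, to a one-shot reduction from a communication problem of complexity $\Omega(m)$ on a shared vertex gadget, the argument does not go through.
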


\begin{theo}{\bf (Lower bound for finding the exact size of the minimum cut, i.e., $\size{\cutvalue}$)}
\label{theo-lb-count-exact-cut}
    Let $m,n,t \in \N$ with $2 \leq t \leq n-2$ and $ 2nt \leq m \leq {n \choose 2}$. Any algorithm that has access to {\sc Degree}, {\sc Neighbor}, {\sc Adjacency} and {\sc Random Edge} queries to an unknown graph $G = (V, E)$ must make at least $\Omega(m)$ queries in order to decide whether $\size{\mbox{\sc Cut}(G)}=t$ or $\size{\mbox{\sc Cut}(G)}=t-2$ with probability $2/3$. 
\end{theo}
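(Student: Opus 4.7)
The plan is to prove this lower bound by exhibiting two distributions $\mathcal{D}_{\text{yes}}$ and $\mathcal{D}_{\text{no}}$ over graphs on $n$ vertices with $m$ edges, supported on graphs with minimum cut exactly $t$ and $t-2$ respectively, such that no algorithm issuing $o(m)$ queries from the allowed repertoire can distinguish them with constant advantage. The construction is an adaptation of the one behind Theorem~\ref{theo-lower-bound-exact-cut}, which itself builds on the Eden--Rosenbaum template.

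I would start from a common base graph $G_0$ on vertex set $V = A \cup B$ with $|A|=|B|=n/2$ and $m-2$ edges, chosen so that each side is internally edge-connected of order at least $t$ and exactly $t-2$ edges cross $(A,B)$, all placed on a fixed spine that avoids the special vertices introduced below. Consequently, the unique minimum cut of $G_0$ is the bipartition cut, of value $t-2$. Now pick uniformly at random four distinct special vertices $a_1, a_2 \in A$ and $b_1, b_2 \in B$, conditioning on $\{a_1,a_2\}$ and $\{b_1,b_2\}$ not being edges of $G_0$. The two distributions are defined by adding two edges to $G_0$: in $\mathcal{D}_{\text{no}}$ I add the intra-side edges $\{a_1,a_2\}$ and $\{b_1,b_2\}$, keeping the minimum cut at $t-2$; in $\mathcal{D}_{\text{yes}}$ I add the crossing edges $\{a_1,b_1\}$ and $\{a_2,b_2\}$, pushing the minimum cut to exactly $t$ while still being attained by $(A,B)$. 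Critically, the degree sequence is identical in both cases, since each of $a_1, a_2, b_1, b_2$ gains exactly one incident edge, and the two graphs have exactly $m$ edges.

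I would then bound, query by query, the total variation between response distributions under the two priors. \textsc{Degree} queries are non-informative by the identical degree sequences. \textsc{Random Edge} queries contribute total variation only $O(1/m)$ per query, because the two edge sets differ in just four edges out of $m$. An \textsc{Adjacency} query on a specific pair is informative only if the pair coincides with one of the four special (anti-)edge positions; by the uniform choice of special vertices this occurs with probability $O(1/n^2)$ per query. A \textsc{Neighbor} query at a non-special vertex is distribution-identical, while at a special vertex it takes roughly $\Theta(m/n)$ further queries to pinpoint the signal neighbor (under a uniformly random ordering of the neighbor list). Summing via a standard hybrid argument, the distinguishing advantage of $q$ queries is $O(q/m) = o(1)$ whenever $q = o(m)$, and Yao's minimax principle converts this distributional bound into the claimed $\Omega(m)$ randomized lower bound.

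The main obstacle is the \textsc{Neighbor} analysis: an adversarial algorithm may concentrate its queries on a handful of vertices, so one must certify that even after observing the entire neighborhood of many vertices the signal remains hidden unless $\Omega(m)$ total queries have been spent. Making this rigorous requires (i) a uniformly random placement of the signal neighbor inside its neighbor list, (ii) a design of $G_0$ ensuring that every vertex's neighborhood is marginally indistinguishable from a special-vertex neighborhood with the signal removed, and (iii) a careful union bound across all queried vertices. Once these pieces are in place the argument closes cleanly: $o(m)$ queries leak only $o(1)$ total variation, so they cannot distinguish $\mathcal{D}_{\text{yes}}$ from $\mathcal{D}_{\text{no}}$, yielding the claimed $\Omega(m)$ lower bound.
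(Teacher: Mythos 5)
Your plan is genuinely different from the paper's: you propose a direct Yao-style indistinguishability argument over two planted distributions, whereas the paper reduces from the $\Omega(N)$-hard communication problem {\sc $t/2$-Intersection} with $N=\Theta(m)$ and simulates each local query with $O(1)$ bits of communication. However, your construction has a genuine gap.

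You \emph{add} two edges to a fixed, publicly known base graph $G_0$, so the four special vertices $a_1,a_2,b_1,b_2$ end up with degree $\deg_{G_0}(\cdot)+1$ while every other vertex retains its $G_0$-degree. A deterministic algorithm can therefore make $n$ {\sc Degree} queries, compare each answer with the known $\deg_{G_0}$, identify the four bumped vertices, and finish with a single {\sc Adjacency} query on the two bumped vertices inside $A$: the edge between them exists if and only if the instance is drawn from $\mathcal{D}_{\text{no}}$. This costs $n+1$ queries; since $m\geq 2nt$, for any $t=\omega(1)$ this is $o(m)$, contradicting the $\Omega(m)$ bound you claim. The observation that ``{\sc Degree} queries are non-informative by the identical degree sequences'' only says the degree multiset cannot separate $\mathcal{D}_{\text{yes}}$ from $\mathcal{D}_{\text{no}}$; it does not prevent degrees from revealing \emph{where} the planted edges sit, which is exactly what the attack exploits.

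The repair, and what the paper's hard-instance construction does, is to never change a degree: for each index pair $(i,j)$ one \emph{swaps} between an intra-side pair of edges and a crossing pair (rather than adding edges), so every vertex has the same degree regardless of the hidden input, and the signal is distributed over $\Theta(m)$ edge positions instead of being concentrated on four vertices. Even after such a degree-preserving patch, your route would still have to carry out the {\sc Neighbor}-query hiding argument from scratch --- the step you yourself flag as the main obstacle --- while the communication-complexity reduction in the paper packages that entire step into a single known lower bound.
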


 Local queries show a clear separation in its power in finding \mincut as opposed to the estimation problem. This is established by using the tight lower bound of minimum cut estimation (viz. $\Omega(m/t)$ lower bound of Eden and Rosenbaum and our Theorem~\ref{thm-mincut-estimation-local-queries}) vis-a-vis minimum cut finding as mentioned in our Theorems~\ref{theo-lower-bound-exact-cut} and ~\ref{theo-lb-count-exact-cut} on lower bound for finding \cutvalue. 
 \remove{
 Notice that Rubinstein et al.\ gave their upper bounds for finding \mincut and the same also holds for the estimation problem using \cutquery. Graur et al.'s~\cite{graur2019new} lower bound holds for finding \mincut.}

\paragraph*{Notations.}
In this paper, we denote the set $\{1,\ldots,n\}$ by $[n]$. For ease of notation we sometimes use $[n]$ to denote the set of vertices of a graph.
We say $x\geq 0$ is an $(1\pm \epsilon)$-approximation to $y\geq 0$ if $\size{x-y} \leq \epsilon y$. $V(G)$ and $E(G)$ would denote the vertex and edge sets when we want to make the graph $G$ explicit, else we use $V$ and $E$. For a graph $G$, $\mbox{\sc Cut}(G)$ denotes the set of edges in a minimum cut of $G$. Let $A_{1}$, $A_{2}$ be a partition of $V$, i.e., $V = A_{1} \cup A_{2}$ with $A_{1} \cap A_{2} = \emptyset$. Then, $\cC_G(A_{1},A_{2})= \left\{\{u,v\} \in E\; : \; u \in A_{1}\;\mbox{and}\; v\in A_{2}\right\}$.\remove{
Let $\Pi:\{0,1\}^N \times \{0,1\}^N \to \{0,1\}$. Then $R(\Pi)$ denotes the randomized communication complexity of $\Pi$.}
        The statement \emph{with high probability} means that the probability of success is at least $1-\frac{1}{n^c}$, where $c$ is a positive constant. 
        $\widetilde{\Theta}(\cdot)$ and $\widetilde{\Oh}(\cdot)$ hides a $\mbox{poly}\left(\log n, \frac{1}{\eps}\right)$ term in the upper bound.
        \remove{
\begin{itemize}
    \item
        In this paper, we denote the set $\{1,\ldots,n\}$ by $[n]$.
        
    \item
        For ease of notation we sometimes use $[n]$ to denote the set of vertices of a graph. This is explicitly stated when used. (used in Proof of Theorem 6, upper bound)

    \item   
        We say $x\geq 0$ is an $(1\pm \epsilon)$-approximation to $y\geq 0$ if $\size{x-y} \leq \epsilon y$.

    \item   
        Let $G = (V(G), E(G))$ denotes a graph $G$ with vertex set $V(G)$ and edge set $E(G)$.
        
    \item
        For a graph $G$, $\mbox{\sc Cut}(G)$ denotes set of edges in a minimum cut of $G$.
    
    \item
        Let $G = (V(G), E(G))$ be a graph, and also let $A_{1}$, $A_{2}$ be a partition of $V(G)$, i.e., $V(G) = A_{1} \cup A_{2}$ with $A_{1} \cap A_{2} = \emptyset$. Then, $\cC_G(A_{1},A_{2})= \left\{\{u,v\} \in E(G)\; : \; u \in A_{1}\;\mbox{and}\; v\in A_{2}\right\}$.

    \item 
        Let $\Pi:\{0,1\}^N \times \{0,1\}^N \to \{0,1\}$. Then $R(\Pi)$ denotes the randomized communication complexity of $\Pi$.
        
    \item 
        The statement \emph{with high probability} means that the probability of success is at least $1-\frac{1}{n^c}$, where $c$ is a positive constant. 
        
    \item 
        $\widetilde{\Theta}(\cdot)$ and $\widetilde{\Oh}(\cdot)$ hides a $\mbox{poly}\left(\log n, \frac{1}{\eps}\right)$ term in the upper bound.
\end{itemize}
}

\paragraph*{Organization of the paper}
Section~\ref{sec:queryalgo} discusses the query algorithm for estimating the \mincut while Section~\ref{sec:lowerbound} proves lower bounds on finding the \mincut. Section~\ref{sec-conclusion} concludes with a few observations.

\section{Estimation algorithm}
\label{sec:queryalgo}
\noindent
In this Section, we will prove Theorem~\ref{thm-mincut-estimation-local-queries}. In Section~\ref{sec:overview}, we talk about the intuitions and give the overview of our algorithm. We formalize the intuitions in Section~\ref{sec:algo-main}.

\subsection{Overview of our algorithm}
\label{sec:overview}
\noindent
We start by assuming that a lower bound $\hat{t}$ on $t=\size{\mbox{\sc Cut}(G)}$ is known. Later, we discuss how to remove this assumption.

We generate a random subgraph $H$ of $G$ by sampling each edge of the graph $G$  independently with probability $p=\Theta\left({\log n}/{\eps^2 \hat{t}}\right)$. Using Chernoff bound, we can show that any particular cut of size $k$, $k \geq t$, in $G$ is \emph{well approximated} in $H$ with probability at least $n^{-\Omega(k/\hat{t})}$. With this idea, consider the following Algorithm, stated informally, for minimum cut estimation.
\paragraph*{Algorithm-Sketch (works with $\hat{t} \leq t$)}
\begin{description}
\item[Step-1:] Generate a random subgraph $H$ of $G$ by sampling each edge in $G$  independently with probability $p={\Theta}\left({\log n}/{\eps^2 \hat{t}}\right)$. Note that $H$ can be generated by using $\widetilde{O}\left({m}/{\hat{t}}\right)$ many {\sc Degree} and {\sc Neighbor} queries in expectation. We will discuss it in Algorithm~\ref{algo:sample} in Section~\ref{sec:algo-main}.
\item[Step-2] Determine $\size{\cutvalueH}$ and report 
$\widetilde{t}=\frac{\size{\mbox{\sc Cut}(H)}}{p} $ as an $(1\pm \eps)$-approximation of $\size{\cutvalue}$.
\end{description}
The number of queries made by the above algorithm is $\widetilde{O}\left({m}/{\hat{t}}\right)$ in expectation. But it produces correct output only when the vertex partition corresponding to $\cutvalue$ and $\cutvalueH$ are the same. This is not the case always. If we can show that all cuts in $G$ is approximately preserved in $H$, then Algorithm-Sketch produces correct output with high probability. The main bottleneck to prove it is that the total number of cuts in $G$ can be exponential. A result of Karger (stated in the following lemma) will help us to make Algorithm-Sketch work. 
\begin{lem}[Karger~\cite{Karger93}]
\label{lem:karger_cut_bound}
For a given graph $G$ the number of cuts in $G$ of size at most $j \cdot \size{\mbox{\sc Cut}(G)}$ is at most $n^{2j}$.
\end{lem}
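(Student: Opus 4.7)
The plan is to derive this bound via Karger's random edge contraction algorithm. Recall the algorithm: repeatedly pick a uniformly random edge of the current multigraph and contract it (merging endpoints, keeping parallel edges and deleting self-loops), and stop once some chosen number of supernodes remain. A cut $C$ of $G$ is said to \emph{survive} up to a given stage if no edge of $C$ has been contracted so far; equivalently, for every supernode $v$ at that stage, the preimage of $v$ in $V(G)$ lies entirely on one side of $C$. Crucially, throughout the process the minimum cut of the contracted multigraph remains at least $t := \size{\mbox{\sc Cut}(G)}$, because any cut of the contracted multigraph lifts to a cut of $G$.

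I would run contraction until exactly $2j$ supernodes remain. At any stage with $k$ supernodes, the degree-sum formula combined with the above observation gives at least $kt/2$ edges. For any fixed cut $C$ with $\size{C}\leq jt$, the conditional probability that the next contraction picks an edge of $C$ is at most $\size{C}/(kt/2) \leq 2j/k$. Multiplying the survival probabilities across stages telescopes cleanly:
$$
\Pr[C \text{ survives down to } 2j \text{ supernodes}] \;\geq\; \prod_{k=2j+1}^{n}\left(1-\frac{2j}{k}\right) \;=\; \prod_{k=2j+1}^{n}\frac{k-2j}{k} \;=\; \frac{1}{\binom{n}{2j}}.
$$

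The counting step is to observe that the contracted multigraph on $2j$ supernodes has at most $2^{2j-1}$ distinct nonempty bipartitions, and each surviving cut $C$ corresponds to exactly one such bipartition. Hence, deterministically, at most $2^{2j-1}$ cuts of $G$ with $\size{C}\leq jt$ can simultaneously survive, so by taking expectations,
$$
\sum_{C:\;\size{C}\leq jt}\Pr[C \text{ survives}] \;\leq\; 2^{2j-1}.
$$
Combining with the per-cut lower bound yields $\size{\{C:\size{C}\leq jt\}}\leq \binom{n}{2j}\cdot 2^{2j-1}$. The remaining step is the purely numerical bound $\binom{n}{2j}\cdot 2^{2j-1}\leq n^{2j}$, which follows from $\binom{n}{2j}\leq n^{2j}/(2j)!$ together with the elementary inequality $(2j)! \geq 2^{2j-1}$ (true at $j=1$ and trivially inductive).

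The only real obstacle is the invariant that the minimum cut of the contracted multigraph stays $\geq t$; this needs a short argument that a contraction cannot decrease the minimum cut value (since contractions only remove from the set of bipartitions available). Everything else is a telescoping product and a bookkeeping inequality, so no delicate step remains.
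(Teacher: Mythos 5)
Your proof is correct, and it is the standard argument from Karger's paper, which the authors cite without reproducing (Lemma~\ref{lem:karger_cut_bound} is stated as a citation to~\cite{Karger93}, not proved in the text). The contraction-to-$2j$-supernodes argument, the telescoping product giving a per-cut survival probability of at least $1/\binom{n}{2j}$, and the expectation/counting step bounding the number of simultaneously surviving cuts by $2^{2j-1}$ are exactly Karger's approach. Two small points worth noting: the argument as written applies to integer $j$ (for real $j$ one would contract to $\lceil 2j\rceil$ supernodes and adjust the arithmetic), which is all the paper uses; and when $2j>n$ the contraction argument doesn't apply but the bound $n^{2j}$ exceeds $2^{n-1}$ and is trivially true, so nothing is lost.
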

Using the above lemma along with Chernoff bound, we can show the following.
\begin{lem}
\label{lem:karger}
Let $G$ be a graph, $ \hat{t} \leq t = \size{\mbox{\sc Cut}(G)}$ and $\eps \in (0,1)$. If $H(V(G), E_p)$ be a subgraph of $G$ where each edge in $E(G)$ is included in $E_p$ with probability $p = \min \left\{\frac{200 \log n}{\eps^2 \hat{t}},1 \right\}$ independently, then every cut of size $k$ in $G$ has size $pk(1 \pm \eps)$ in $H$ with probability at least $1-\frac{1}{n^{10}}$.
\end{lem}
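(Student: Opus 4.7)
The plan is to combine a Chernoff concentration bound, applied per cut, with Karger's bound on the number of small cuts (Lemma~\ref{lem:karger_cut_bound}) via a layered union bound over cut sizes. First, the trivial case $p=1$ gives $H=G$ and every cut is preserved exactly, so assume $p = 200\log n/(\eps^2 \hat{t}) < 1$.

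Fix any cut of $G$ of size $k$; call its edges $e_1,\ldots,e_k$. The number of these edges that survive in $H$ is a sum $X = \sum_{i=1}^{k} X_i$ of independent Bernoulli$(p)$ variables with $\E[X]=pk$. By the standard multiplicative Chernoff bound, for $\eps \in (0,1)$,
\[
\pr{\,|X-pk|>\eps\, pk\,} \;\le\; 2\exp\!\left(-\frac{\eps^2 pk}{3}\right) \;\le\; 2\exp\!\left(-\frac{200\,k\log n}{3\,\hat t}\right) \;\le\; 2\,n^{-66\,k/\hat t}.
\]
Since $\hat t \le t \le k$, every cut individually fails with probability at most $2n^{-66 k/t}$; the rate of decay is exponential in $k/t$, which is precisely what is needed to beat Karger's count.

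Next, I partition the family of cuts of $G$ by size. For each integer $\alpha \ge 1$, let $\cC_\alpha$ denote the collection of cuts whose size lies in $[\alpha t,(\alpha+1)t)$. Lemma~\ref{lem:karger_cut_bound} applied with $j=\alpha+1$ gives $|\cC_\alpha|\le n^{2(\alpha+1)}$. Every cut in $\cC_\alpha$ has size $k \ge \alpha t \ge \alpha \hat t$, so by the per-cut bound above, each fails with probability at most $2 n^{-66\alpha}$. A union bound over $\cC_\alpha$ yields failure probability at most $2 n^{2(\alpha+1) - 66\alpha} = 2 n^{2 - 64\alpha}$ in layer $\alpha$.

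Summing over all layers,
\[
\sum_{\alpha \ge 1} 2\, n^{2 - 64\alpha} \;\le\; 2 n^{-62} \sum_{\alpha \ge 0} n^{-64\alpha} \;\le\; 4 n^{-62} \;\le\; \frac{1}{n^{10}},
\]
for $n$ large enough, which is the claimed bound. The only delicate point is making sure that Karger's polynomial growth $n^{2(\alpha+1)}$ in the number of cuts is dominated by the exponential-in-$\alpha$ decay $n^{-\Theta(\alpha)}$ of the per-cut failure probability; the constant $200$ in the definition of $p$ is chosen precisely to guarantee a comfortable gap, leaving plenty of slack to absorb the union bound and give the final $1 - n^{-10}$ success probability.
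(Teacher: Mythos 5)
Your proof is correct and follows essentially the same route as the paper: a per-cut multiplicative Chernoff bound whose exponent scales as $k/\hat t$, combined with Karger's cut-counting bound (Lemma~\ref{lem:karger_cut_bound}) and a union bound layered over cut sizes in multiples of $t$. The only cosmetic difference is in the final bookkeeping (you sum a geometric series in $\alpha$ to get $O(n^{-62})$, whereas the paper bounds each layer uniformly by $n^{-11}$ and multiplies by the number of layers), but the idea and the constants are the same.
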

The above lemma implies the correctness of Algorithm-Sketch, which is for minimum cut estimation when we know a lower bound $\hat{t}$ of $\size{\cutvalue}$. But in general we do not know any such $\hat{t}$. To get around the problem, we start guessing $\hat{t}$ starting from $\frac{n}{2}$ each time reducing $\hat{t}$ by a factor of $2$. The guessing scheme gives the desired solution due to Lemma~\ref{lem:karger} coupled with the following intuition when $\hat{t}=\Omega(t\log n/\eps^2)$ --- if we generate a random subgraph $H$ of $G$ by sampling each edge with probability $p = \Theta\left({\log n}/{\eps^2 \hat{t}}\right)$, then $H$ is disconnected with at least a constant probability. So, it boils down to a connectivity check in $H$. The intuition is formalized in the following Lemma that can be proved using Markov's inequality.
\begin{lem}
Let $G$ be a graph with $\size{V(G)} = n$, $\hat{t} \geq \frac{2000 \log n}{\eps^2}\size{\mbox{\sc Cut}(G)}$ and $\eps \in (0,1)$. If $H(V(G), E_p)$ be a subgraph of $G$ where each edge in $E(G)$ is included in $E_p$ independently with probability $p=\min \left\{\frac{200\log n}{\eps^2 \hat{t}},1\right\}$, then $H$ is connected with probability at most $\frac{1}{10}$.
\label{lem:highguess1}
\end{lem}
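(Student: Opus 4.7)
The plan is to reduce the event that $H$ is connected to the event that at least one edge of a fixed minimum cut of $G$ survives the sampling, and then apply Markov's inequality to the number of surviving min-cut edges. The only non-routine step is recognizing which cut to focus on, and this is immediate.

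First I would fix any minimum cut $(S,V\setminus S)$ of $G$, so that there are exactly $t=\size{\mbox{\sc Cut}(G)}$ edges of $G$ crossing between $S$ and $V\setminus S$. Let $X$ denote the number of these $t$ crossing edges that land in $E_p$. The key observation is that if $X=0$, i.e.\ none of these crossing edges is sampled into $H$, then $(S,V\setminus S)$ is a bipartition of $V(H)$ with no crossing edges in $H$, so $H$ is disconnected. Contrapositively, $\{H \text{ is connected}\}\subseteq \{X\geq 1\}$.

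Next I would bound $\E[X]$. Since each of the $t$ minimum-cut edges is included in $E_p$ independently with probability $p$, we have $\E[X]=pt$. In the nontrivial regime $p=\frac{200\log n}{\eps^2\hat t}$, the hypothesis $\hat t\geq \frac{2000\log n}{\eps^2}\,t$ gives
$$\E[X]=pt\leq \frac{200\log n}{\eps^2\hat t}\cdot t\leq \frac{200\log n}{\eps^2}\cdot \frac{\eps^2}{2000\log n}=\frac{1}{10}.$$
The clipped case $p=1$ forces $\hat t\leq \frac{200\log n}{\eps^2}$, which combined with the hypothesis forces $t<1$, so $G$ itself is disconnected and $H$ trivially cannot be connected. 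Markov's inequality then yields $\Pr[H \text{ is connected}]\leq \Pr[X\geq 1]\leq \E[X]\leq \tfrac{1}{10}$, which is exactly the statement of the lemma.

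I do not anticipate a main obstacle here. Unlike Lemma~\ref{lem:karger}, which needs a union bound over all (possibly exponentially many) cuts and therefore invokes Karger's cut-counting bound, the present statement only asks us to exhibit a single bipartition whose crossing edges disappear in $H$; focusing on the minimum cut alone suffices. The one-sided nature of the claim --- we merely upper-bound the probability of connectivity --- is precisely what makes this simpler argument go through.
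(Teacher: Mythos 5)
Your proposal is correct and matches the paper's own proof: both fix a minimum cut $(A_1,A_2)$, note that $\E[\size{\cC_H(A_1,A_2)}]=p\,t\leq 1/10$ under the stated hypothesis, and apply Markov's inequality to bound $\Pr[H\text{ connected}]\leq\Pr[\size{\cC_H(A_1,A_2)}\geq 1]$. Your treatment is actually slightly more careful than the paper's, since you explicitly handle the clipped case $p=1$ (forcing $t=0$) rather than leaving it implicit.
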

\remove{
We cam make the probability $1/10$ to high probability by repeating the process stated in the above lemma $\widetilde{O}(1)$ times.
Now consider the following informal algorithm (Algorithm-2) for {Minimum Cut Estimation} that does not need a lower bound on $t$.
\paragraph*{Algorithm-2:}
\begin{description}
\item[Step-1:] Initialize $\hat{t}=n/2$,
\item[Step-2] Run Informal-Algorithm-1 for $\hat{t}$,
\item[Step-3:] Generate a random subgraph $H$ of $G$ by sampling each edge in $G$ with probability $p=\widetilde{\Theta}\left(\frac{1}{\hat{t}}\right)$.
\item[Step-4] If $H$ is disconnected, set $\hat{t}$ as $\frac{\hat{t}}{2}$ and go to Step-2. Otherwise, set $\hat{t}$ to roughly $\frac{\hat{t}}{\log n /eps^2}$ and go to Step-2.
\end{description}
 By Lemma~\ref{lem:highguess1}, we can say when the sampled subgraph $H$ is disconnected for all $\hat{t} \geq \Omega\left(\frac{\log n}{\eps^2t}\right)$ with high probability. The first $\hat{t}$, such that the sampled subgraph $H$ is connected, satisfies $\hat{t} =O\left(\frac{\log n}{\eps^2t}\right)$. That's why we set $\hat{t}$ to roughly $\frac{\hat{t}}{\log n /eps^2}$ and go to Step-2 and new $\hat{t}$ becomes less than $t$. By Lemma~\ref{lem:karger}, we have   }
Before moving to the next section, we prove Lemma~\ref{lem:karger} and~\ref{lem:highguess1} here.
\begin{proof}[Proof of Lemma~\ref{lem:karger}]
If $p=1$, we are done as the graph $H$ is exactly same as that of $G$. So, without loss of generality assume that the graph $G$ is connected. Otherwise, the lemma holds trivially as $\size{\mbox{\sc Cut}(G)} =0$, i.e., $\hat{t}=0$ and $p=1$. Hence, for the rest of the proof we will assume that $p=\frac{200\log n}{\eps^2 \hat{t}}$.

Consider a cut $\cC_G(A_{1}, A_{2})$ of size $k$ in $G$. As we are sampling each edge with probability $p$, the expected size of the cut $\cC_H(A_{1}, A_{2})$ is $pk$. Using Chernoff bound (see Lemma~\ref{lem:chernoff} in Section~\ref{sec:prob}), we get
\begin{equation}\label{eq:cher}
     \pr \left(\size{\mbox{$\cC_H(A_{1},A_{2})-pk$}} \geq \eps pk  \right) \leq e^{-\eps^2pk/3\hat{t}} 
             = n^{-\frac{100k}{3\hat{t}}}
\end{equation}

Note that here we want to show that every cut in $G$ is approximately preserved in $H$. To do so, we will use Lemma~\ref{lem:karger_cut_bound} along with Equation~\ref{eq:cher} as follows.  
Let $Z_1,Z_2,\ldots,Z_{\ell}$ be the partition of the set of all cuts in $G$ such that each cut in $Z_j$ has the number of edges between $\left[j\cdot \size{\mbox{\sc Cut}(G)},(j+1)\size{\mbox{\sc Cut}(G)}\right]$, where $\ell \leq \frac{n}{\size{\mbox{\sc Cut}(G)}}$ and $j \leq \ell-1$. From Lemma~\ref{lem:karger_cut_bound}, $\size{Z_j} \leq n^{2j}$. Consider a particular $Z_j, j \in [\ell]$. Using the union bound along with Equation~\ref{eq:cher}, the probability that there exists a cut in $Z_j$ that is not approximately preserved in $H$ is at most $\frac{1}{n^{11}}$. Taking union bound over all $Z_j$'s, the probability that there exists a cut in $G$ that is not approximately preserved is at most $\frac{1}{n^{10}}$.
\end{proof}
\begin{proof}[Proof of Lemma~\ref{lem:highguess1}]
Let $\cC_G(A_{1},A_{2})$ be a minimum cut in $G$.
Observe that 
$$
    \E \left[ \size{\cC_{H}(A_{1}, A_{2})}\right] = p\size{\cC_G(A_{1},A_{2})} = p \size{\mbox{\sc Cut}(G)}.
$$
The result follows from Markov's inequality.
$$ 
  \pr \left(\mbox{$G$ is connected}\right) \leq
 \pr \left(\size{\cC_G(A_{1}, A_{2} )} \geq 1\right) \leq \E[\size{\cC_G(A_{1}, A_{2})}] \leq \frac{1}{10}.
$$
\end{proof}

\subsection{Formal Algorithm (Proof of Theorem~\ref{thm-mincut-estimation-local-queries})}
\label{sec:algo-main}
\noindent
In this Section, the main algorithm for minimum cut estimation is described in Algorithm~\ref{algo:est} ({\sc Estimator}) that makes multiple calls to Algorithm~\ref{algo:guess} ({\sc Verify-Guess}). The {\sc Verify-Guess} subroutine in turn calls Algorithm~\ref{algo:sample} ({\sc Sample}) multiple times.

Given degree sequence of the graph $G$, that can be obtained using degree queries, we will first show how to independently sample each edge of $G$ with probability $p$ using only {\sc Neighbor} queries. 

\begin{algorithm}[H]\label{algo:sample}
\SetAlgoLined
\caption{{\sc Sample}(${ D},p)$}
\label{algo-sample}
\KwIn{${D}=\{d(i): i \in [n] \}$, where $d(i)$ denotes the degree of the $i$-th vertex in the graph $G$, and $p \in (0,1]$.}
\KwOut{Return a subgraph $H(V,E_p)$ of $G(V,E)$ where each edge in $E(G)$ is included in $E_p$ with probability $p$.}
\remove{Find $m=\frac{1}{2}\sum \limits_{i=1}^{d(i)}$.\\
\If{($p \geq \frac{100 \log n}{m}$)}
{
{\sc Find All the edges:} Extract the entire graph $G$ by using $D$ along with neighbor queries. Determine $\mbox{{\sc Cut }}_r(G)$ and return $G(V,E)$ as $H(V,E_p)$.
}}
Set $q=1-\sqrt{1-p}$ and $m=\frac{\sum_{i=1}^nd_i}{2}$ \\
\For{(each $i \in [n]$)}
{
\For{(each $j \in [d(i)]$ with $d(i) > 0$)}
{
// Let $r_{j}$ be the $j$-th neighbor of the $i$-th vertex \\
Add the edge $(i,r_{j})$ to the set $E_{p}$ with probability $q$; \\
}

}
Return the graph $H(V,E_p)$.
\end{algorithm}

The following lemma proves the correctness of the above algorithm {\sc Sample}(${ D},p)$.

\begin{lem}
\label{lem:sample_dp_lemma}
{\sc Sample}$( D,p)$ returns a random subgraph $H(V(G),E_p)$ of $G$ such that each edge $e \in E$ is included in $E_p$ independently with probability $p$. Moreover, in expectation, the number of {\sc Neighbor} queries made by {\sc Sample}$(D,p)$ is at most $2pm$.
\end{lem}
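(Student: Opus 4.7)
The plan is to separate the claim into two parts: (i) the distributional claim that each edge of $G$ is in $E_p$ independently with probability $p$, and (ii) the query-complexity claim of at most $2pm$ expected Neighbor queries.

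For part (i), I will fix an edge $e = \{u,v\} \in E(G)$ and observe that it is considered exactly twice inside the double loop: once when processing vertex $u$ (at the index $j$ for which $r_j = v$), and once when processing vertex $v$ (at the index $j'$ for which $r_{j'} = u$). At each of these two visits, $e$ is added to $E_p$ independently with probability $q$. Hence
\[
\Pr[e \notin E_p] \;=\; (1-q)^2 \;=\; \bigl(\sqrt{1-p}\bigr)^2 \;=\; 1 - p,
\]
which is precisely the reason for the choice $q = 1 - \sqrt{1-p}$. For independence across edges, I will note that the coin flips are indexed by pairs $(i,j)$ with $i \in [n]$ and $j \in [d(i)]$, and distinct edges correspond to disjoint pairs of such coin flips; since all coin flips are mutually independent, so are the indicator events $\mathbf{1}[e \in E_p]$ over $e \in E(G)$.

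For part (ii), the key implementation remark is that the $j$-th Neighbor query on vertex $i$ is only needed when the corresponding coin comes up heads; equivalently, one can first draw the number of successes from $\mathrm{Bin}(d(i), q)$ and then sample the successful indices without replacement, issuing a Neighbor query only for those. Either way, the expected number of Neighbor queries is
\[
\sum_{i=1}^{n} q \cdot d(i) \;=\; 2mq.
\]
It then remains to upper bound $q$ by $p$, which follows from the elementary inequality $\sqrt{1-p} \ge 1-p$ for $p \in [0,1]$, so $q = 1 - \sqrt{1-p} \le p$, giving the $2pm$ bound.

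I do not anticipate a major obstacle here; the main subtlety is justifying that Neighbor queries can be avoided when the coin is tails, which is what makes the query count scale with $p$ rather than with the full degree sum $2m$. The identity $(1-q)^2 = 1-p$ and the independence argument are the conceptual core, while the query bound is an accounting step using $q \le p$.
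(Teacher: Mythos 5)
Your proof is correct and follows essentially the same approach as the paper. The distributional claim is argued the same way (each edge is flipped twice with probability $q$, giving $\Pr[e \in E_p] = 1-(1-q)^2 = p$, with independence from disjoint coin flips); for the query count, the paper bounds Neighbor queries by $2|E_p|$ and uses $\E[|E_p|] = pm$, whereas you count expected coin-heads directly as $2mq$ and observe $q \le p$ — two minor variants of the same accounting, both resting on the (implicit in the algorithm, and rightly flagged by you as the crucial implementation detail) observation that a Neighbor query is issued only when the corresponding coin comes up heads.
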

\begin{proof}
From the description of {\sc Sample}$(D, p)$, 
it is clear that 
the probability that a particular edge $e \in E(G)$ is added to $E_p$ with probability $1-(1-q)^2=p$.

Observe, $\E \left[\size{E_{p}} \right] = pm$. The bound on the number of {\sc Neighbor} queries now follows from the fact that {\sc Sample}$(D,p)$ makes at most $2\size{E_p}$ many {\sc Neighbor} queries.
\end{proof}

One of the core ideas behind the proof of Theorem~\ref{thm-mincut-estimation-local-queries} is that, given an estimate $\hat{t}$ of $t$, we want to efficiently (in terms of number of local queries used by the algorithm) decide if $\hat{t} \leq t$ or if $\hat{t}  \gtrsim \frac{\log n}{\eps^{2}} \times t$. Using Algorithm~\ref{algo}, we will show that this can be done using $\widetilde{\mathcal{O}}\left({m}/{\hat{t}}\right)$ many {\sc Neighbor} queries in expectation. Another interesting feature of Algorithm~\ref{algo} is that, if estimate $\hat{t} \leq t$, then Algorithm~\ref{algo} outputs an estimate which is a $(1\pm \eps)$-approximation of $t$.

\begin{algorithm}[h]\label{algo:guess}
\SetAlgoLined
\caption{{\sc Verify-Guess}(${ D},\hat{t},\eps$)}
\label{algo}
\KwIn{${ D}=\{d(i): i \in [n] \}$, where $d(i)$ denotes the degree of the $i$-th vertex in the graph $G$ and $m=\frac{1}{2}\sum\limits_{i=1}^{n}d(i) \geq n-1$. Also, a guess $\hat{t}$, with $1 \leq \hat{t} \leq \frac{n}{2}$, for the size of the global minimum cut in $G$, and $\eps \in (0,1)$.}
\KwOut{The algorithm should ``{\sc Accept}'' or ``{\sc Reject}'' $\hat{t}$, with high probability, depending on the following 
\begin{itemize}
    \item If $\hat{t} \leq \size{\mbox{\sc Cut}(G)}$, then {\sc Accept}
    $\hat{t}$ and also output an $(1\pm\eps)$-approximation of $\size{\mbox{\sc Cut}(G)}$ 
    
    \item If $\hat{t} \geq \frac{200\log n}{\eps^{2}} \size{\mbox{\sc Cut}(G)}$, then 
    {\sc Reject} $\hat{t}$
\end{itemize}
}
Set $p=\min \left\{\frac{200\log ^2 n}{\eps^2 \hat{t}},1\right\}$. \\
Set $\Gamma=100 \log n $ and Call {\sc Sample}($D,p$) $\Gamma$ times.\\
 Let $H_i(V,E^i_p)$ be the output of $i$-th call to {\sc Sample}($D,p$), where $i \in [\Gamma]$\\
\If{$\left(\mbox{at least}~{\Gamma}/{2} ~\mbox{many}~ H_i's\mbox{ are disconnected}\right)$}
{{\sc Reject} $\hat{t}$}
\ElseIf{(all $H_i$'s are connected)}
{{\sc Accept} $\hat{t}$, find $\mbox{{\sc Cut}}(H_i)$ for any $i \in [\Gamma]$, and return $\tilde{t}= \frac{ \size{\mbox{{\sc Cut}}(H_i)}}{p}$.}
\Else {
Return {\sc Fail}.\\
// When we cannot decide between ``{\sc Reject}'' or ``{\sc Accept}'' it will return {\sc Fail}
}
\end{algorithm}

The following lemma proves the correctness of Algorithm~\ref{algo}. The lemmas used in proof are Lemmas ~\ref{lem:karger}, ~\ref{lem:highguess1} and~\ref{lem:sample_dp_lemma}.

\begin{lem}
\label{lem:guess}
{\sc Verify-Guess}$(D,\hat{t},\eps)$ in expectation makes $\widetilde{\mathcal{O}}\left(\frac{m}{\hat{t}} \right)$ many {\sc Neighbor} queries to the graph $G$ and behaves as follows:
\begin{itemize}
    \item[(i)] If $\hat{t} \geq \frac{2000 \log n}{\eps^2}\size{\mbox{{\sc Cut}}(G)} $, then  {\sc Verify-Guess}$(D,\hat{t},\eps)$ rejects $\hat{t}$ with probability at least $1-\frac{1}{n^{9}}$.
    
    \item[(ii)] If $\hat{t} \leq \size{\mbox{{\sc Cut}}(G)} $, then {\sc Verify-Guess}$(D,\hat{t},\eps)$ accepts $\hat{t}$ with probability at least $1-\frac{1}{n^{9}}$. Moreover, in this case, {\sc Verify-Guess}$(D,\hat{t},\eps)$ reports an $(1 \pm \eps)$-approximation to $\mbox{{\sc Cut}}{(G)}$. 
\end{itemize}
\end{lem}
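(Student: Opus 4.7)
The plan is to combine Lemmas~\ref{lem:karger},~\ref{lem:highguess1}, and~\ref{lem:sample_dp_lemma} with a standard majority-amplification argument over the $\Gamma=100\log n$ independent calls to {\sc Sample}. For the query complexity, Lemma~\ref{lem:sample_dp_lemma} gives that each call to {\sc Sample}$(D,p)$ uses at most $2pm$ {\sc Neighbor} queries in expectation; summing over the $\Gamma=\Theta(\log n)$ calls and recalling that $p=\widetilde{\mathcal{O}}(1/\hat{t})$ immediately yields the claimed $\widetilde{\mathcal{O}}(m/\hat{t})$ expected query bound.

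For case~(i), when $\hat{t}\geq\frac{2000\log n}{\eps^2}\size{\mbox{\sc Cut}(G)}$: Lemma~\ref{lem:sample_dp_lemma} ensures that each $H_i$ is obtained from $G$ by sampling edges independently with probability $p$, so Lemma~\ref{lem:highguess1} applies and gives $\pr\left(H_i \text{ is connected}\right)\leq 1/10$. The $\Gamma$ samples are mutually independent, so the number of connected $H_i$'s is stochastically dominated by a $\mathrm{Binomial}(\Gamma,1/10)$ random variable with mean $\Gamma/10$. A standard Chernoff bound then bounds the probability that more than $\Gamma/2$ of the $H_i$'s are connected by $e^{-\Omega(\Gamma)}\leq n^{-9}$ for the chosen $\Gamma=100\log n$. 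Hence, with probability at least $1-n^{-9}$, the algorithm observes at least $\Gamma/2$ disconnected $H_i$'s and rejects $\hat{t}$.

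For case~(ii), when $\hat{t}\leq\size{\mbox{\sc Cut}(G)}$: apply Lemma~\ref{lem:karger} to each $H_i$ to obtain that, with probability at least $1-1/n^{10}$, every cut of size $k$ in $G$ has size $pk(1\pm\eps)$ in $H_i$. A union bound over the $\Gamma$ samples ensures this preservation event holds simultaneously for all $i\in[\Gamma]$ with probability at least $1-\Gamma/n^{10}\geq 1-n^{-9}$. Conditioned on this, for each $i$ the minimum cut of $H_i$ lies in the interval $[pt(1-\eps),\,pt(1+\eps)]$: every cut of $G$ of size $k\geq t$ contributes at least $pk(1-\eps)\geq pt(1-\eps)$ edges to the corresponding cut in $H_i$, while the specific minimum cut of $G$ (of size $t$) contributes at most $pt(1+\eps)$. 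Moreover $pt(1-\eps)\geq 1$, since $p\hat{t}=\Omega(\log^2 n/\eps^2)$ and $\hat{t}\leq t$, so each $H_i$ is connected. Consequently, the algorithm enters the ``all $H_i$'s connected'' branch, accepts $\hat{t}$, and outputs $\size{\mbox{\sc Cut}(H_i)}/p\in[(1-\eps)t,\,(1+\eps)t]$, which is the required $(1\pm\eps)$-approximation of $t$.

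The main obstacle is the case~(ii) argument that the minimum cut of the random subgraph $H_i$ faithfully tracks $t=\size{\mbox{\sc Cut}(G)}$ after rescaling by $1/p$. The subtle point is that the argmin cut of $H_i$ need not coincide with a minimum cut of $G$, so per-cut concentration alone does not suffice; one must establish uniform preservation across all (potentially exponentially many) cuts of $G$. This is precisely what Lemma~\ref{lem:karger} delivers by pairing a Chernoff-type tail bound with Karger's count of near-minimum cuts (Lemma~\ref{lem:karger_cut_bound}), which is why amplification only needs to repeat the sampler $\mathrm{poly}(\log n)$ times.
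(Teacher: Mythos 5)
Your proposal is correct and follows essentially the same approach as the paper: Lemma~\ref{lem:sample_dp_lemma} for the expected query count, Lemma~\ref{lem:highguess1} plus a Chernoff-majority argument for case~(i), and Lemma~\ref{lem:karger} with a union bound over the $\Gamma$ samples for case~(ii). The only difference is that you spell out the step the paper leaves implicit in case~(ii) --- namely that uniform preservation of all cuts forces $\size{\mbox{\sc Cut}(H_i)} \in [pt(1-\eps), pt(1+\eps)]$ and in particular $\geq 1$, so each $H_i$ is connected and the algorithm enters the {\sc Accept} branch --- which is a welcome clarification but not a different argument.
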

\begin{proof}
 {\sc Verify-Guess}$(D,\hat{t},\eps)$ calls {\sc Sample}($D,\eps$) for $\Gamma=100 \log n$ times with $p$ being set to $\min \left\{\frac{200\log n}{\eps^2 \hat{t}},1\right\}$. Recall, from Lemma~\ref{lem:sample_dp_lemma}, that each call to {\sc Sample}$(D,p)$ makes in expectation  at most $2pm$ many {\sc Neighbor} queries, and returns a random subgraph $H(V,E_p)$, where each edge in $E(G)$ is included in 
 $E_p$ with probability $p$. So, {\sc Verify-Guess}$(D,\hat{t},\eps)$ 
 makes  in expectation $\mathcal{O}(pm \log n)=\widetilde{\mathcal{O}}\left( {m}/{\hat{t}}\right)$ many 
{\sc neighbor} queries and generates $\Gamma$ many random subgraphs of $G$. The subgraphs are denoted by
 $H_1(V,E_p^1),\ldots,H_{\Gamma}(V,E_p^{\Gamma})$.

\begin{description}
\item[(i)] Let $\hat{t} \geq \frac{2000\log n}{\eps^2}\size{\mbox{{\sc Cut}}(G)}$. From Lemma~\ref{lem:highguess1}, we have that $H_i$ will be connected with probability at most $\frac{1}{10}$. Observe that in expectation, we get that at least $\frac{9\Gamma}{10}$ many $H_i$'s will be disconnected. By Chernoff bound (see Lemma~\ref{lem:chernoff} in Section~\ref{sec:prob}), the probability that at most $\frac{\Gamma}{2}$ many $H_i$'s are disconnected is at most $\frac{1}{n^{10}}$. Therefore, {\sc Verify-Guess}$(D,\hat{t},\eps)$ rejects any $\hat{t}$ satisfying $\hat{t} \geq \frac{2000 \log n}{\eps^2}\size{\mbox{{\sc Cut}}(G)}$ with probability at least $1-\frac{1}{n^{9}}$.

\item[(ii)]  Let $\hat{t} \leq \size{\mbox{{\sc Cut}}(G)}$. Using Lemma~\ref{lem:karger}, we have that every cut of size $k$ in $G$ has size $pk(1 \pm \eps)$ in $H_i$ with probability at least $1-\frac{1}{n^{10}}$. Therefore, with probability at least $1-\frac{\Gamma}{n^{10}}$, for all $i \in [\Gamma]$, every cut of size $k$ in $G$ has size $pk(1 \pm \eps)$ in $H_{i}$. This implies that if $\hat{t} \leq \size{\mbox{\sc Cut}(G)}$ then {\sc Verify-Guess}$(D,\hat{t},\eps)$ accepts any $\hat{t}$ with probability at least $1-\frac{1}{n^{9}}$. Moreover, for any $H_i$, observe that $\frac{\size{\mbox{{\sc Cut}}(H_i)}}{p}$ is an $(1 \pm \eps)$-approximation to $\size{\mbox{{\sc Cut}}(G)}$. Hence, when $\hat{t} \leq \size{\mbox{{\sc Cut}}(G)}$, {\sc Verify-Guess}$(D,\hat{t},\eps)$ also returns an $(1\pm \eps)$ approximation to $\size{\mbox{\sc Cut}(G)}$ with probability $1-\frac{1}{n^9}$.
\end{description}
\end{proof}

{\sc Estimator}($\eps$) (Algorithm~\ref{algo-estimator}) will estimate the size of the minimum cut in $G$ using {\sc Degree} and {\sc Neighbor} queries. The main subroutine used by the algorithm will be {\sc Verify-Guess}$(D,\hat{t},\eps)$.

\begin{algorithm}[h] \label{algo:est}
\SetAlgoLined
\caption{{\sc Estimator}($\eps$)}
\label{algo-estimator}
\KwIn{{\sc Degree} and {\sc Neighbor} query access to an unknown graph $G$, and a parameter $\eps \in (0,1)$.}
\KwOut{Either returns an $(1 \pm \eps)$-approximation to $t=\size{\mbox{{\sc Cut}}(G)}$ or {\sc Fail}}
Find the degrees of all the vertices in $G$ by making $n$ many {\sc degree} queries. Let $D=\{d(1),\ldots,d(n)\}$, where $d(i)$ denotes the degree of the $i$-th vertex in $G$.\\
If $\exists i \in [n]$ such that $d(i) = 0$, then return $t=0$ and {\sc Quit}. Otherwise, proceeds as follows.\\
Find $m=\frac{1}{2} \sum \limits _{i=1}^n d(i)$. If $m<n-1$, return $t=0$ and {\sc Quit}. Otherwise, proceed as follows.\\
Set $\kappa=\frac{2000 \log  n}{\eps^2}$\\
Initialize $\hat{t}=\frac{n}{2}$.\\
\While{($\hat{t} \geq 1$)}
{
Call {\sc Verify-Guess}$(D,\hat{t},\eps)$. \\
\If{({\sc Verify-Guess}$(D,\hat{t},\eps)$ returns {\sc Reject})}
{set $\hat{t}=\frac{\hat{t}}{2}$ and continue.}
\Else { 
// Note that in this case {\sc Verify-Guess}$(D,\hat{t},\eps)$ either returns {\sc Fail} or {\sc Accept}.\\
Set $\hat{t}_{u} =\max \left\{\frac{\hat{t}}{\kappa},1\right\}$.\\ 
Call {\sc Verify-Guess}($D,\hat{t}_{u},\eps$). \\
\If{({\sc Verify-Guess}($D,\hat{t}_{u},\eps$) returns {\sc Fail} or {\sc Reject})} 
{return {\sc Fail} as the output of {\sc Estimator}($\eps$)
}
\Else
{
Let $\tilde{t}$ be the output of {\sc Verify-Guess}($D,\hat{t}_{u},\eps$). \\
Return $\tilde{t}$ as the output of {{\sc Estimator}}$(\eps)$.
}
}}
{\bf Output:} Return that the graph $G$ is disconnected.
\end{algorithm}

The following lemma shows that with high probability {\sc Estimator}($\eps$) correctly estimates the size of the minimum cut in the graph $G$, and it also bounds the expected number of queries used by the algorithm.

\begin{lem}\label{thm:main}
{\sc Estimator}($\eps$) returns $(1 \pm \eps)$ approximation to $\size{\mbox{{\sc Cut}}(G)}$ with probability at least $1-\frac{1}{n^8}$ by making in expectation  $\min\left\{m+n,\frac{m}{t}\right\}\mbox{poly}\left(\log n,\frac{1}{\eps}\right)$ many queries and each query is either a {\sc Degree} or a {\sc Neighbor} query to the unknown graph $G$.
\end{lem}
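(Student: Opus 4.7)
The plan is to introduce a \emph{good event} $\cG$ on which every call to \textsc{Verify-Guess} made inside \textsc{Estimator}$(\eps)$ conforms to Lemma~\ref{lem:guess}, verify correctness on $\cG$, and bound the expected query complexity by decomposing on $\cG$ and its complement. The while-loop halves $\hat{t}$ starting from $n/2$, so it performs at most $\log_2 n$ outer calls, each followed by at most one inner call on $\hat{t}_u$; a union bound over these $\mathcal{O}(\log n)$ calls, each failing with probability at most $1/n^9$, yields $\pr[\cG] \geq 1 - 1/n^8$, which is exactly the success probability claimed.

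For correctness on $\cG$, write $\kappa = 2000\log n/\eps^2$ and $\hat{t}_i = n/2^i$. Part (i) of Lemma~\ref{lem:guess} rejects every $\hat{t}_i \geq \kappa t$, and part (ii) accepts every $\hat{t}_i \leq t$, so together with the halving schedule the first non-rejected value $\hat{t}^*$ lies in $(t/2,\kappa t]$. Then $\hat{t}_u = \hat{t}^*/\kappa \in (t/(2\kappa), t]$, and Lemma~\ref{lem:guess}(ii) applied to the inner call forces acceptance and returns the required $(1 \pm \eps)$-approximation to $\size{\mbox{\sc Cut}(G)}$. I would dispatch the degenerate branches separately: the initial scan traps isolated vertices and the case $m < n-1$, and when $G$ has all positive degrees but is still disconnected, every sampled subgraph inherits the disconnection, so every \textsc{Verify-Guess} rejects under $\cG$ and the loop exits with the correct verdict $\size{\mbox{\sc Cut}(G)} = 0$.

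For the expected-query bound, the $n$ initial \textsc{Degree} queries are unconditional, and by Lemma~\ref{lem:sample_dp_lemma} each call to \textsc{Verify-Guess}$(D,\hat{t},\eps)$ uses $\tOh(m/\hat{t})$ \textsc{Neighbor} queries in expectation. On $\cG$ the outer loop halts at $\hat{t}^* > t/2$, so the per-iteration costs form a geometric series dominated by $\tOh(m/t)$; the inner call on $\hat{t}_u = \Theta(t/\kappa)$ adds another $\tOh(m/t)$. When $t \geq 1$, the cut-degree inequality $m \geq tn/2$ absorbs the $n$ into $\tOh(m/t)$. For disconnected $G$ the loop runs all $\mathcal{O}(\log n)$ iterations and the last one, with $p=1$, costs $\tOh(m)$, matching the $m+n$ branch of $\min\{m+n, m/t\}$. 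Finally, the contribution of $\overline{\cG}$ to the expectation is at most the worst-case cost $\mathcal{O}(m\log^2 n)$ times $\pr[\overline{\cG}] \leq 1/n^8$, which is $o(1)$.

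The main technical obstacle is this expected-cost accounting across a random number of iterations: a naive sum of $\tOh(m/\hat{t}_i)$ over every potentially executed iteration (down to $\hat{t}_i=1$) gives $\tOh(m)$ and loses the factor of $t$. The key observation that fixes this is that reaching any iteration with $\hat{t}_i \leq t/2$ requires some earlier iteration with $\hat{t}_j \leq t$ to have been erroneously rejected — an event of probability at most $1/n^9$ by Lemma~\ref{lem:guess}(ii). Since the per-iteration cost $\tOh(m/\hat{t}_i)$ and the execution indicator for iteration $i$ are independent (the former depends on the randomness within iteration $i$, the latter on previous iterations), weighting by this probability truncates the geometric tail at $\hat{t}_i \approx t$ and restores the target bound $\tOh(m/t)$.
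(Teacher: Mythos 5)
Your proposal is correct and follows the same overall plan as the paper: condition on a good event $\cG$ (a union bound over the $\Oh(\log n)$ calls to {\sc Verify-Guess} gives $\pr[\cG] \geq 1-1/n^8$), argue that on $\cG$ the first non-rejected guess $\hat{t}^*$ lies in $(t/2,\kappa t]$ so the inner call at $\hat{t}_u \leq t$ is accepted and returns the $(1\pm\eps)$-approximation, and bound the expected query count by decomposing on $\cG$ and $\overline{\cG}$. Your explicit independence argument for the expected-cost factorization---the cost of iteration $i$ depends only on iteration-$i$ randomness while the indicator that iteration $i$ is reached depends only on earlier randomness---gives a cleaner justification than the paper's informal two-term conditional bound, and your deterministic treatment of the $t=0$ case (a disconnected $G$ forces every sampled $H_i$ to be disconnected) is a mild sharpening, but these are refinements of the paper's proof rather than a different route.
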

\begin{proof}
Without loss of generality, assume that $n$ is a power of $2$. If $m < n-1$ or if there exists a $i \in [n]$ such that $d_{i} = 0$ then the graph $G$ is disconnected. In this case the algorithm {\sc Estimator}($\eps$) makes $n$ {\sc Degree} queries and returns the correct answer. Thus we assume that $m\geq n-1$. 

First, we prove the correctness and query complexity when the graph is
connected, that is, $t\geq 1$. Note that {\sc Estimator}($\eps$) calls
{\sc Verify-Guess}($D,\hat{t},\eps$) for different values of $\hat{t}$ starting from $\frac{n}{2}$. Recall that $\kappa=\frac{2000 \log n}{\eps^2}$. For a particular $\hat{t}$ with $\hat{t}\geq \kappa t$, {\sc Verify-Guess}($D,\hat{t},\eps$) does not {\sc Reject} $\hat{t}$ with probability at most $\frac{1}{n^{9}}$ by Lemma~\ref{lem:guess}~(i). So, by the union bound, the probability that {\sc Verify-Guess}($D,\hat{t},\eps$) will either {\sc Accept} or {\sc Fail} for some $\hat{t}$ with $\hat{t}\geq \kappa t$, is at most $\frac{\log n}{n^{9}}$. Hence, with probability at least
$1-\frac{\log n}{n^{9}}$, we can say that the {\sc Verify-Guess}($D,\hat{t},\eps$) rejects all $\hat{t}$ with $\hat{t}\geq \kappa t$. 

Observe that, from Lemma~\ref{lem:guess}~(ii), the first time $\hat{t}$ satisfy the following inequality
$$ 
    \frac{t}{2} <\hat{t} \leq t,
$$
{\sc Verify-Guess}($D,\hat{t},\eps$) will accept $\hat{t}$ with probability at least $1-\frac{1}{n^{9}}$.
Therefore, for the first time {\sc Verify-Guess}($D,\hat{t},\eps$) will either {\sc Accept} or {\sc Fail}, then $\hat{t}$  satisfies the following inequality
$$
    \frac{t}{2}< \hat{t} < \kappa t
$$
 with probability at least $1-\frac{\log n +1}{n^{9}}$. Let $\hat{t}_{0}$ denote the first time {\sc Verify-Guess} returns {\sc Accept} or {\sc Fail}. From the description of {\sc Estimator}($\eps$), note that, we get $\hat{t}_{u}$ by dividing $\hat{t}_{0}$ by $\kappa$. Note that, with probability at least $1-\frac{1+\log n}{n^{9}}$, we have $\hat{t}_{u} < t$. We then call the procedure {\sc Verify-Guess}($D,\hat{t},\eps$) with $\hat{t} = \hat{t}_{u}$. By
Lemma~\ref{lem:guess}~(ii), {\sc Verify-Guess}($D,\hat{t}_{u},\eps$) will {\sc Accept} and report an $(1\pm \eps)$ approximation to $t$ with probability at least $1-\frac{1}{n^{9}}$. 

We will now analyze the number
of {\sc Degree} and {\sc Neighbor} queries made by the algorithm. We make an initial $n$ many queries to construct the set $D$. Then at the worst case, we call {\sc Verify-Guess}($D,\hat{t},\eps$) for $\hat{t}=\frac{n}{2},\ldots
,t'$ and $\hat{t}=\frac{t'}{\kappa} \geq \frac{t}{2\kappa}$, where $ \frac{t}{2}< t' <\kappa t$. It is because 
{\sc Verify-Guess}($D,\hat{t},\eps$) accepts $\hat{t}$ with probability $1-\frac{1}{n^{9}}$ when the first time $\hat{t}$ satisfy the inequality $\hat{t}\leq t$. Hence, by Lemma~\ref{lem:guess} and the facts that $n \leq \frac{m}{t}$ and $\hat{t}_{u}\geq \frac{t}{2 \kappa}$ with probability at least $1-\frac{\log n + 1}{n^{9}}$, in expectation the total number of queries made by the algorithm is at most
$$
    n + \log n \cdot \left(1 -\frac{\log n + 1}{n^{9}}\right) \cdot \widetilde{\mathcal{O}}\left(\frac{2\kappa m}{t} \right) + \log n\cdot\left( \frac{\log n +1}{n^{9}}\right)\cdot \widetilde{\mathcal{O}}(m)= \widetilde{\mathcal{O}}\left( \frac{m}{t}\right).
$$
Note that each query made by {\sc Estimator}($\eps$) is either a {\sc Degree} or a {\sc Neighbor} query.

Now we analyze the case when $t=0$. Observe that {\sc Verify-Guess}($D,\hat{t},\eps$) rejects all $\hat{t} \geq 1$ with probability $1-\frac{\log n}{n^{9}}$, and therefore, {\sc Estimator}($\eps$) will report $t=0$. As we have called {\sc Verify-Guess}($D,\hat{t},\eps$) for all $\hat{t}=\frac{n}{2},\ldots,1$, the number of queries made by {\sc Estimator}($\eps$), in the case when $t=0$, is $\widetilde{\mathcal{O}}(m) + n$. Note that the additional $n$ term in the bound comes from the fact that to compute $D$ the algorithms needs to make $n$ many {\sc Degree} queries.  
\end{proof}

\section{Lower bounds}\label{sec:lowerbound}
\noindent
In this Section, we prove Theorems~\ref{theo-lower-bound-exact-cut} and~\ref{theo-lb-count-exact-cut} using reductions from suitable problems in communication complexity. In Section~\ref{sec:cc}, we discuss about two party communication complexity along with the problems that will be used in our reductions. We will discuss the proofs of Theorems~\ref{theo-lower-bound-exact-cut} and~\ref{theo-lb-count-exact-cut} in Section~\ref{sec:lbproof}.
\remove{
\begin{theo}{\bf (Lower bound for finding a minimum cut using local queries).}
\label{theo-lower-bound-exact-cut}
   Let $m,n,k \in \N$ with $t \leq n-1$ and $ 2nt \leq m \leq {n \choose 2}$. Any algorithm that has access to {\sc Degree}, {\sc Neighbor}, {\sc Adjacency} and {\sc Random edge} queries to an an unknown graph $G = (V(G), E(G))$  with $\size{V(G)} = n$, $\size{E(G)} = m$ and $\size{\mbox{\sc Cut}(G)} = t$, must make at least $\Omega(m)$ queries in order to find all the edges in a minimum cut of $G$ with probability $2/3$.
\end{theo}

\begin{theo}{\bf (Lower bound for finding the exact size of the minimum cut using local queries).}
\label{theo-lb-count-exact-cut}
    Let $m,n,k \in \N$ with $t \leq n-1$ and $ 2nt \leq m \leq {n \choose 2}$. Any algorithm that has access to {\sc Degree}, {\sc Neighbor}, {\sc Adjacency} and {\sc Random edge} queries to an an unknown graph $G = (V(G), E(G))$  with $\size{V(G)} = n$, $\size{E(G)} = m$ and $\size{\mbox{\sc Cut}(G)} \in \{t,t-2\} $, must make at least $\Omega(m)$ queries in order to decide whether $\size{\mbox{\sc Cut}(G)}=t$ or $\size{\mbox{\sc Cut}(G)}=t-2$ with probability $2/3$.
\end{theo}}

\remove{\begin{rem}
Note that when $m < 2nk$ then...
\end{rem}}

\subsection{Communication Complexity}\label{sec:cc}
\noindent
In two-party communication complexity there are two parties, Alice and Bob, that wish to compute a function $\Pi:\{0,1\}^N \times \{0,1\}^N \to \{0,1\} \cup \{0,1\}^n$~\footnote{The co-domain of $\Pi$ looks odd, as the the co-domain is $\{0,1\}$ usually. However, we need $\{0,1\}\cup \{0,1\}^n$ to take care of all the problems in communication complexity we discuss in this paper}. Alice is given ${\bf x} \in \{0,1\}^N$ and Bob is given ${\bf y} \in \{0,1\}^N$. Let $x_i~(y_i)$ denotes the $i$-th bit of ${\bf x}~({\bf y})$. While the parties know the function $\Pi$, Alice does not know ${\bf y}$, and similarly Bob does not know ${\bf x}$. Thus they communicate bits following a pre-decided protocol $\cP$ in order to compute $\Pi({\bf x},{\bf y})$. We say a randomized protocol $\cP$ computes $\Pi$ if for all $({\bf x},{\bf y}) \in \{0,1\}^N \times \{0,1\}^N$ we have $\pr[\cP({\bf x},{\bf y}) = \Pi({\bf x},{\bf y})] \geq 2/3$. The model provides the parties access to common random string of arbitrary length. The cost of the protocol $\cP$ is the maximum number of bits communicated, where maximum is over all inputs $({\bf x},{\bf y}) \in \{0,1\}^N \times \{0,1\}^N$. The communication complexity of the function is the cost of the most efficient protocol computing $\Pi$. For more details on communication complexity see~\cite{KN97}. We now define two functions {\sc $k$-Intersection} and {\sc Find-$k$-Intersection} and discuss their communication complexity. Both these functions will be used in our reductions.

\begin{defi}[{\sc Find-$k$-Intersection}]
\label{defi:find-intersection}
Let $k,N \in \N$ such that $k\leq N$.
Let $S=\{({\bf x},{\bf y}) \in \{0,1\}^N \times \{0,1\}^N: \sum_{i=1}^Nx_iy_i = k\}$. The {\sc Find-$k$-Intersection} function on $N$ bits is a partial function and is defined as $\FindIntersection_k^N: S \rightarrow \{0,1\}^N$, and is defined as $$\FindIntersection_k^N({\bf x},{\bf y}) = {\bf z}, \mbox{where}~ z_i=x_iy_i~ \mbox{for each}~ i \in [N].$$ 
\end{defi}
Note that the objective is that at the end of the protocol Alice and Bob know ${\bf z}$.

\begin{defi}[{\sc $k$-Intersection}]
\label{defi:k-intersection}
Let $k,N \in \N$ such that $k \leq N$. Let $S=\{({\bf x},{\bf y}):\sum\limits_{i=1}^N x_iy_i=k~\mbox{or}~ k-1\}$. The {\sc $k$-Intersection} function on $N$ bits is a partial function denoted by  $\tn{INT}_k^N:S   \rightarrow \{0,1\}$, and is defined as follows:
\begin{align*}
    \tn{INT}_k^N({\bf x},{\bf y}) = 
    \begin{cases}
    1 & \tn{ if } \sum_{i=1}^Nx_iy_i = k\\
    0 & \tn{otherwise}
    \end{cases}
\end{align*}
\end{defi}
In communication complexity, the {\sc $k$-Intersection} function on $N$ bits when $k=1$ is known as {\sc Disjointness} function on $N$.
\remove{\begin{theo}~\cite{KN97}
\label{theo:disj}
The randomized communication complexity of {\sc Disjointness} on $N$ bits is $\Omega(N)$.
\end{theo}}

\remove{Using standard reductions from {\sc Disjointness}, we can deduce the randomized communication complexities of {\sc $k$-Intersection} and {\sc Find-$k$-Intersection} as stated in Lemma~\ref{theo:kinter} and~\ref{theo:finding-intersection-hard}, respectively.}

\begin{lem}
\label{theo:finding-intersection-hard}
Let $k,N \in \N$ such that $k \leq cN$ for some constant $c<1$. The randomized communication complexity of {\sc Find-$k$-Intersection} function on $N$ bits is $\Omega(N)$.
\end{lem}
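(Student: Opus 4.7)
The plan is to obtain the $\Omega(N)$ bound via a short reduction chain rooted at Unique Disjointness, whose randomized communication complexity on $M$ bits is the classical $\Omega(M)$ bound (Kalyanasundaram--Schnitger / Razborov, as recorded in~\cite{KN97}). I will first establish the statement in the base case $k=1$ by reducing from Unique Disjointness to {\sc Find-$1$-Intersection}, and then lift this to arbitrary $k\leq cN$ by a short padding argument.

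For the base case, suppose $\cP$ is a randomized protocol solving {\sc Find-$1$-Intersection} on $M$ bits with success probability $2/3$. Given a Unique Disjointness instance $({\bf x},{\bf y})\in\{0,1\}^M\times\{0,1\}^M$ (so $\sum_{i} x_i y_i \in \{0,1\}$), Alice and Bob run $\cP$ on $({\bf x},{\bf y})$ to obtain an index $j\in[M]$; then Alice sends $x_j$ to Bob, and they output $x_j y_j$. If $\sum_i x_i y_i = 0$, then $x_j y_j = 0$ for every $j$, so the verification correctly outputs $0$; if $\sum_i x_i y_i = 1$, then $\cP$ returns the unique intersecting coordinate with probability $\geq 2/3$, and the verification correctly outputs $1$ with the same probability. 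This yields a Unique Disjointness protocol using only one extra bit on top of the cost of $\cP$, so the cost of $\cP$ must be $\Omega(M)$.

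For general $k\leq cN$, given a {\sc Find-$1$-Intersection} instance $({\bf x}',{\bf y}')$ of length $M := N-k+1$, Alice and Bob construct a {\sc Find-$k$-Intersection} instance $({\bf x},{\bf y})$ of length $N$ by appending $k-1$ coordinates set to $1$ in both strings. The number of intersecting coordinates is then exactly $1+(k-1)=k$, so the promise is met. Running a protocol for {\sc Find-$k$-Intersection} returns ${\bf z}\in\{0,1\}^N$; discarding the last $k-1$ coordinates (which are known to both parties to be common ones) recovers the answer to the original {\sc Find-$1$-Intersection} instance. Since $c<1$ is a fixed constant, $M = N-k+1 \geq (1-c)N = \Omega(N)$, and the composition shows that {\sc Find-$k$-Intersection} on $N$ bits requires $\Omega(N)$ randomized communication.

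The only place where care is needed is in honoring the partial-function promise: {\sc Find-$k$-Intersection} is defined only when $\sum_i x_i y_i = k$, which is why the padding appends exactly $k-1$ all-one positions rather than fewer or more, and why the base-case reduction uses the one-bit verification to handle the $0$-intersection branch of Unique Disjointness (on which $\cP$ has no guarantee). Apart from this bookkeeping, the reduction is routine, so I do not anticipate significant obstacles.
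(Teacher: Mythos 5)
Your proof is correct and follows essentially the same route the paper intends: the paper states this lemma without an explicit proof (a sketch is present but commented out via \texttt{\textbackslash remove}), and that sketch likewise reduces from (Unique) {\sc Disjointness} to the $k=1$ case by verifying the returned index with one extra bit, then lifts to general $k$ by fixing $k-1$ coordinates to agree, giving $\Omega(N-k+1)=\Omega(N)$. The only cosmetic difference is that you phrase the base case as a direct lower bound for {\sc Find-$1$-Intersection} and then pad that finding problem, whereas the paper's sketch is organized as ``decision reduces to finding,'' but these are the same argument.
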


\begin{lem}
\label{theo:kinter}
Let $k,N \in \N$ such that $k \leq cN$ for some constant $c<1$. The randomized communication complexity of {\sc $k$-Intersection} function on $N$ bits ($\tn{INT}_k^N$) is $\Omega\left(N\right)$. 
 \end{lem}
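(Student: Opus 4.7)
\medskip

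\noindent\textbf{Proof proposal for Lemma~\ref{theo:kinter}.}
The plan is to prove the lower bound by a padding reduction from the standard \textsc{Unique-Disjointness} problem on roughly $(1-c)N$ bits. Recall that \textsc{Unique-Disjointness} on $N'$ bits is the promise problem $\tn{INT}_1^{N'}$ of distinguishing $\sum_{i} x_i y_i = 0$ from $\sum_i x_i y_i = 1$, and its randomized communication complexity is well-known to be $\Omega(N')$ (this is exactly the $k=1$ case; see Kalyanasundaram--Schnitger or Razborov as cited in \cite{KN97}). Our goal is to exhibit a communication protocol for \textsc{Unique-Disjointness} on $N' = N - (k-1)$ bits that uses at most $R(\tn{INT}_k^N) + O(1)$ bits, from which the $\Omega(N)$ bound immediately follows using $N' = (1-c)N + 1 = \Omega(N)$.

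The reduction is the obvious padding. Given an input $({\bf x}', {\bf y}') \in \{0,1\}^{N'} \times \{0,1\}^{N'}$ to \textsc{Unique-Disjointness}, Alice forms ${\bf x} \in \{0,1\}^N$ by appending $k-1$ extra coordinates all equal to $1$ to ${\bf x}'$, and Bob similarly forms ${\bf y}$ by appending $k-1$ ones to ${\bf y}'$. No communication is used for this step since the appended block is fixed and known to both parties. Then
\[
\sum_{i=1}^N x_i y_i \;=\; \sum_{i=1}^{N'} x'_i y'_i + (k-1),
\]
so the input $({\bf x},{\bf y})$ lies in the promise set of $\tn{INT}_k^N$ (the intersection is $k-1$ or $k$), and $\tn{INT}_k^N({\bf x},{\bf y}) = 1$ iff the original \textsc{Unique-Disjointness} instance is a NO-instance (intersection $1$). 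Alice and Bob now run the assumed protocol for $\tn{INT}_k^N$ and flip its answer.

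The main (and essentially only) thing to check is that the resulting $N$ is large enough to make $N' = \Omega(N)$, which uses the hypothesis $k \le cN$ with $c<1$: we get $N' \ge (1-c)N$, so any randomized protocol solving $\tn{INT}_k^N$ must use $\Omega(N') = \Omega(N)$ bits. I do not anticipate a serious obstacle here; the only subtlety is being careful that the \textsc{Unique-Disjointness} lower bound we cite is for the promise version (so that the reduced inputs genuinely lie in the promise set of $\tn{INT}_k^N$), which is standard.
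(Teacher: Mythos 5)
Your proof is correct and matches the approach the paper intends: the paper invokes a ``standard reduction from \textsc{Disjointness}'' for this lemma (an earlier draft of the proof, visible in the source, uses exactly the same device of fixing $k-1$ coordinates to $1$ to pad the instance), and your padding from \textsc{Unique-Disjointness} on $N' = N-(k-1)$ bits, together with the observation that $N' \geq (1-c)N = \Omega(N)$, is precisely that argument carried out cleanly.
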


\remove{
\begin{proof}
For $k = 1$, $\Intersection_1^N({\bf x},{\bf y})$ reduces to $\FindIntersection_1^N({\bf x},{\bf y})$, for all $x,y \in \zone^N$ such that either $\sum_{i=1}^N x_iy_i = 1$ or $0$. Let $x,y$ be the input for $\Intersection_1^N$. Alice and Bob run the communication protocol for $\FindIntersection_1^N$ on $({\bf x},{\bf y})$ to obtain $i$ as the output. The parties communicate $x_i, y_i$. If $x_i = y_i = 1$ then return $1$; otherwise return $0$. Note that if $\sum_{i=1}^N x_iy_i = 1$ then the protocol for $\FindIntersection_1^N$ returns correct index with probability at least $2/3$. On the other hand if $\sum_{i=1}^N x_iy_i = 0$ then for no $i \in N$ we have $x_i = y_i = 1$. \textcolor{blue}{elaborate}

Now we extend the above lower bound for $k = 1$ to $\FindIntersection_k^N$, when $k = o(n)$. Consider the subset of inputs to $({\bf x},{\bf y}) \in \zone^N \times \zone^N$, $\sum_{i=1}^N x_iy_i = k$ obtained by fixing $x_i = y_i$ for $i \in [k-1]$. Thus lower bound is $\Omega(N - k + 1) = \Omega(N)$.
\end{proof}}

\remove{\subsubsection{Eden-Rosenbaum Framework}
For completeness we describe the framework introduced by \cite{ER18} for proving lower bounds on query complexity of graph properties that uses reduction from communication complexity. Note that while the \cite{ER18} framework is more general we only state it in a form suitable for us. \textcolor{blue}{Question: 1) the query lower bounds are adaptive also}

\paragraph{The \cite{ER18} Framework}
Let $\cG_n$ be the set of all graph on $n$ vertices and $P:\cG_n \to \{0,1\}$ be the property of interest. 
Let $\cQ$ be the set of queries that an algorithm $\cA$ is allowed to make to the given $G \in \cG_n$. For example the set of allowed queries may be {\sc Degree}, {\sc Neighbor}, {\sc Adjacency} and {\sc Eandom Edge}.
\begin{enumerate}
    
    \item Let $\Pi({\bf x},{\bf y})$ be the communication problem to reduce from, for $x,y \in \{0,1\}^N$
    
    \item For all $x,y \in \{0,1\}^N$ define $G({\bf x},{\bf y}) \in \cG_n$ such that
    \begin{itemize}
        \item if $\Pi({\bf x},{\bf y}) = 1$ then $P(G({\bf x},{\bf y})) = 1$
        \item if $\Pi({\bf x},{\bf y}) = 0$ then $P(G({\bf x},{\bf y})) = 0$
    \end{itemize}
    
    \item Let $Q$ be the maximum communication cost over all queries in $\cQ$ that $\cA$ is allowed to make. We also require that there is no error in communication of these queries. \textcolor{blue}{rewrite last line}.
\end{enumerate}

The lower bound on query complexity of $\cA$ follows from the following theorem. We provide a proof of the theorem for completeness.
\begin{theo}[\cite{ER18}]
Let $\cA$ be a query algorithm that computes $P$ using $T$ queries from $\cQ$. Then $T = \Omega(R(\Pi)/Q)$.
\end{theo}
\begin{proof}
On inputs $({\bf x},{\bf y})$, Alice and Bob simulate $\cA$ on $G({\bf x},{\bf y})$ which accesses the graph by making queries from $\cQ$. Since the graph depends on both $x$ and $y$, Alice and Bob may have to communicate in order to answer a query made by $\cA$. For all queries that $\cA$ can make the communication cost is upper bounded by $Q$. Since $\cA$ computes $P$ and $P(G({\bf x},{\bf y})) = \Pi({\bf x},{\bf y})$ we get the desired lower bound.
\end{proof}

}
\subsection{Proofs of Theorems~\ref{theo-lower-bound-exact-cut} and~\ref{theo-lb-count-exact-cut}}
\label{sec:lbproof}
\noindent
The proofs of Theorems~\ref{theo-lower-bound-exact-cut} and~\ref{theo-lb-count-exact-cut} are inspired from the lower bound proof of Eden and Rosenbaum~\cite{ER18} for estimating \mincut~\footnote{Note that Eden and Rosenbaum~\cite{ER18} stated the result in terms $k$-Edge Connectivity.}.

\begin{proof}[Proof of Theorem~\ref{theo-lower-bound-exact-cut}]
\label{proof-of-finding-lb}
We prove by giving a reduction from {\sc Find-$t/2$-Intersection} on $N$ bits. Without loss of generality assume that $t$ is even. Let ${\bf x}$ and ${\bf y}$ be the inputs of Alice and Bob. Note that $\sum\limits_{i=1}^N x_iy_i=t/2$.

We first discuss a graph $G_{({\bf x}{\bf y})}(V,E)$ that can be generated from $({\bf x},{\bf y})$, such that $\size{V}=n$ and $\size{E}=m\geq 2nt$, and works as the `hard' instance for our proof. Note that $G_{({\bf x},{\bf y})}$ should be such that no useful information about the \mincut can be derived by knowing only one of ${\bf x}$ and ${\bf y}$.  Let $s = t + \sqrt{t^2 + (m-nt)/2}$ and $N = s^2$. In particular, $2t \leq s \leq 2t + 3\sqrt{m}$. Also, $s \geq \sqrt{m/2}$ and therefore $s = \Theta(\sqrt{m})$.

\subsubsection*{{\bf The graph $G_{({\bf x},{\bf y})}$ and its properties:}}
$G_{({\bf x},{\bf y})}$  has the following structure.
\begin{itemize}
    \item $V = S_A \cup T_A \cup S_B \cup T_B \cup C$ such that $|S_A| = |T_A| = |S_B| = |T_B| = s$ and $|C| = n - 4s$. Let $S_A = \{s_i^A : i \in [s]\}$ and similarly  $T_A = \{t_i^A : i \in [s]\}$, $S_B = \{s_i^B : i \in [s]\}$ and $T_B = \{t_i^B : i \in [s]\}$.
    
    \item Each vertex in $C$ is connected to $2t$
    different vertices in $S_A$.
   
    \item For  $i,j \in [s]$: if $x_{ij} = y_{ij} = 1$, then $(s_i^A, t_j^B) \in E$ 
    and $(s_i^B, t_j^A) \in E$; otherwise, $(s_i^A, t_j^A) \in E$ and $(s_i^B, t_j^B) \in E$.
\end{itemize}
\begin{obs}
 $G_{({\bf x},{\bf y})}$ satisfies the following properties.
\begin{description}
\item[Property-1:] The degree of every vertex in $C$ is $2t$. For any $v \notin C$, the neighbors of $v$ inside $C$ are fixed irrespective of ${\bf x}$ and ${\bf y}$; and the number of neighbors outside $C$ is $s\geq 2t$.
    \item[Property-2:]  There are $t$ edges between the vertex sets $(C \cup S_A \cup T_A)$ and $(S_B \cup T_B)$, and removing them $G_{({\bf x},{\bf y})}$ becomes disconnected.
    
    \item[Property-3:] Every pair of vertices $(S_A \cup T_A \cup C)$ is connected by at least $3t/2$ edge disjoint paths. Also, every pair of vertices in $(S_B \cup T_B)$ is connected by at least $3t/2$ edge disjoint paths.
    
    \item[Property-4:] The set of $t$ edges between the vertex sets $(C \cup S_A \cup T_A)$ and $(S_B \cup T_B)$ forms the unique global minimum cut of $G({\bf x},{\bf y})$,
    
    \item[Property-5:] $x_{ij}=y_{ij}=1$ if and only if $(s_i^A,t_j^B)$ and $(s_i^B,t_j^A)$ are the edges in the unique global minimum cut of $G_{({\bf x},{\bf y})}$.
    
\end{description}
\end{obs}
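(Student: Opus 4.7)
The plan is to verify each of the five properties, working from easiest to hardest. Properties 1 and 5 are essentially bookkeeping from the construction, so I would dispatch them quickly: Property 5 holds because the pair $(s_i^A, t_j^B), (s_i^B, t_j^A)$ is inserted \emph{iff} $x_{ij}=y_{ij}=1$, and these are precisely the edges that cross between the two sides. Property 1 holds because each $c \in C$ is defined to have exactly $2t$ neighbors in $S_A$ (a placement fixed in advance and independent of $\bx, \by$), while each $v \notin C$ gets exactly one neighbor per pair index, yielding $s \geq 2t$ external neighbors. Property 2 is a direct count: the only edges between $C \cup S_A \cup T_A$ and $S_B \cup T_B$ are the $(s_i^A, t_j^B)$ and $(s_i^B, t_j^A)$ created when $x_{ij}y_{ij}=1$; since $\sum_{(i,j)} x_{ij}y_{ij} = t/2$ there are exactly $t$ such crossing edges, and removing them clearly disconnects the two pieces.

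The main technical step is Property 3, which I would tackle via an explicit cut-counting argument on each side. The side $S_B \cup T_B$ is bipartite, obtained from $K_{s,s}$ by deleting exactly $t/2$ edges. In $K_{s,s}$, any nontrivial partition placing $a$ vertices of $S_B$ and $b$ vertices of $T_B$ on one side has cut $s(a+b)-2ab$, whose minimum over nontrivial $(a,b)$ is $s$ (attained e.g.\ at $(a,b)=(1,0)$). Removing $t/2$ edges can reduce any single cut by at most $t/2$, so the edge connectivity of side B is at least $s - t/2 \geq 3t/2$, using $s \geq 2t$. For side A, I would split into two cases. In case (i) the partition is already nontrivial on $S_A \cup T_A$, and the same bipartite bound gives at least $3t/2$ edges cut inside $S_A \cup T_A$ alone. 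In case (ii) all of $S_A \cup T_A$ lies on one side and the other side contains some $c \in C$; then the $2t$ edges from $c$ into $S_A$ are all cut, again giving $\geq 3t/2$. Menger's theorem then upgrades this global edge-connectivity bound into the desired $3t/2$ edge-disjoint paths between any pair.

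Property 4 follows easily from Properties 2 and 3: if some cut in $G$ had size at most $t$, then the induced partition restricted to each side would have to be trivial (all-or-nothing), because any nontrivial intra-side cut costs $\geq 3t/2 > t$ by Property 3. The only nontrivial global partition of $V$ respecting this constraint is the split between $C \cup S_A \cup T_A$ and $S_B \cup T_B$, which by Property 2 has cut size exactly $t$, yielding both minimality and uniqueness. I expect the main obstacle to be the bipartite cut-size lower bound in Property 3 --- specifically, verifying that the $t/2$ edge deletions cannot create a spuriously small cut on side B, and bookkeeping the mixed case on side A where the partition cuts both across $C$--$S_A$ edges and across $S_A$--$T_A$ edges. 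Once that case analysis is in hand, everything else is routine counting.
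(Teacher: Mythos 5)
Your proposal is correct, and it reaches Property~3 by a genuinely different route than the paper. The paper proves Property~3 by \emph{explicitly constructing} $3t/2$ edge-disjoint paths in three cases (two vertices in $S_A$ via common neighbors in $T_A$; a vertex in $S_A$ and one in $T_A$ via a size-$3t/2$ matching; two vertices in $C$ routed through $S_A$ and $T_A$), then notes Property~4 follows. You instead lower-bound the \emph{edge connectivity} of each side directly: the side $G[S_B\cup T_B]$ (and the $S_A$--$T_A$ bipartite core) is $K_{s,s}$ minus exactly $t/2$ edges, and the cut formula $s(a+b)-2ab$ has minimum $s$ over nontrivial $(a,b)$, so deleting $t/2$ edges leaves edge connectivity $\geq s-t/2\geq 3t/2$; for side A you split into ``$S_A\cup T_A$ is split'' (bipartite bound applies) versus ``one side is a nonempty subset of $C$'' (the $2t$ edges from any such $c$ into $S_A$ are all cut). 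Menger then converts the connectivity bound into edge-disjoint paths. Your two cases for side A are exhaustive, your $K_{s,s}$ cut formula is correct, and the deletion argument is sound (removing $t/2$ edges decreases any cut by at most $t/2$). What each approach buys: the paper's hands-on path construction avoids invoking Menger and makes the $3t/2$ paths explicit, at the cost of three separate constructions; your cut-counting argument is shorter and more conceptual, collapsing the three path-construction cases into a single convexity observation plus one degree count for $C$, and it dovetails more naturally with the way Property~4 is actually used (bounding cuts). One small thing worth stating explicitly if you write this up: the crossing-edge count in Property~2 is $2\cdot(t/2)=t$ because \emph{two} edges $(s_i^A,t_j^B)$ and $(s_i^B,t_j^A)$ are contributed per intersecting pair $(i,j)$, which you gesture at but do not spell out.
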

\begin{proof}
Property-1 and Property-2 directly follow from the construction.  Now, we will prove Property-3. We first show that every pair of vertices $(S_A \cup T_A \cup C)$ is connected by at least $3t/2$ edge disjoint paths by breaking the analysis into the following cases. 
\begin{enumerate}
    \item[(i)] Consider $s_i^A, s_j^A \in S_A$, for $i,j \in [s]$. Under the promise that $\sum_{i = 1}^N x_iy_i = t/2$, $s_i^A, s_j^A$ have at least $s-t \geq 3t/2$ common neighbors in $T_A$ and thus there are at least $3t/2$ edge disjoint paths connecting them.
    
    \item[(ii)] Consider $s_i^A \in S_A$ and $t_j^A \in T_A$, for $i,j \in [s]$. Let $s_{j_1}^A, \dots, s_{j_{3t/2}}^A$ be $3t/2$ distinct neighbors of $t_j^A$ in $S_A$. Since, $s_i^A$ has $3t/2$ common neighbors with each $s_{j_r}^A$, $r \in [3t/2]$, there is a matching of size $3t/2$. Denote this matching by $(t_{j_r}^A, s_{j_r}^A)$, $r \in [3t/2]$. Thus $(s_i^A, t_{j_r}^A), (t_{j_r}^A, s_{j_r}^A), (s_{j_r}^A, t_{j}^A)$, for $r \in [3t/2]$, forms a set of edge disjoint paths of size $3t/2$ from $s_i^A$ to $t_j^A$, each of length $3$. In case $s_i^A$ is one of the neighbors of $t_j^A$, then one of the $3t/2$ paths gets reduced to $(s_i^A, t_{j}^A)$, a length $1$ path that is edge disjoint from the remaining paths.
    
    \item[(iii)] Consider $u, v \in C$. Let $u_1, \dots, u_{2t} \in S_A$ and $v_1, \dots, v_{2t} \in S_A$ be the neighbors of $u$ and $v$ respectively in $S_A$. If for some $i,j \in [2t]$, $u_i = v_j$ then $(u,u_i), (u_i,v_j), (v_j,v)$ is a desired path. Thus, assume $u_i \neq v_j$ for all $i,j \in [2t]$. For all $i \in [2t]$, since $u_i$ and $v_i$ have at least $3t/2$ common neighbors in $T_A$ we can find $3t/2$ edge disjoint paths $(u_i, t_i^A), (t_i^A, v_i)$, where $t_i^A \in T^A$. Existence of $3t/2$ edge disjoint paths from $u \in C$ to $v \in S_A$ can be proved as in (i). and from $u \in C$ to $v \in T_A$ can be proved as in (ii).
\end{enumerate}
Similarly, we can show that every pair of vertices in $(S_B \cup T_B)$ is connected by $3t/2$ many edge disjoint paths.

Observe that Property-4 follows from Property-3, and Property-5 
follows from the construction of $G_{({\bf x},{\bf y})}$ and Property-4.
\end{proof}
 
Now, by contradiction assume that there exists an algorithm $\cA$ that makes $o(m)$ queries to $G_{({\bf x},{\bf y})}$ and finds all the edges of a global minimum cut with probability $2/3$. Now, we give a protocol $\cP$ for {\sc Find-$t/2$-Intersection} on $N$ bits when the ${\bf x}$ and ${\bf y}$ are the inputs of Alice and Bob, respectively. Note that  $x,y \in \zone^N$ such that $\sum_{i = 1}^N x_iy_i = t/2$. 
\subsubsection*{Protocol $\cP$ for {\sc Find-$t/2$-Intersection}:}

Alice and Bob run the query algorithm $\cA$ when the unknown graph is $G_{({\bf x},{\bf y})}$. Now we explain how they simulate  the local queries and random edge query on $G_{({\bf x},{\bf y})}$ by communication. We would like to note that each query can be answered deterministically.
\begin{description}
    \item[{\sc Degree} query:] By Property-1, the degree of every vertex does not depend on the inputs of Alice and Bob, and therefore any degree query can be simulated without any communication.
    
    \item[{\sc Neighbor} query:] For $v \in C$, the  set of $2t$ neighbors are fixed by the construction. So, any neighbor query involving any $v \in C$ can be answered without any communication. For $i \in [s]$ and $s_i^A \in S_A$, let $N_C(s_i^A)$ be the set of fixed neighbors of $s_i^A$ inside $C$. So, by Property-1, $d(s_i^A)=\size{N_C(s_i^A)}+s$~\footnote{$d(u)$ denotes the degree of the vertex $u$ in $G_{({\bf x},{\bf y})}$}. The labels of the neighbors of $s_i^A$ are such that the first $\size{N_C(s_i^A)}$ many neighbors are inside $C$, and  they are arranged in a fixed but arbitrary order. For $j \in [s]$, the $(\size{N_C(v)}+j)$-th neighbor of $s_i^A$ is either $t_j^B$ or $s_j^A$ depending on whether $x_{ij}=y_{ij}=1$ or not, respectively. So, any neighbor query involving vertex in $S_A$ can be answered by $2$ bits of communication. Similar arguments also hold for the vertices in $S_B \cup T_A \cup T_B$.
    \item[{\sc Adjacency} query:] Observe that each adjacency query can be answered by at most $2$ bits of communication, and it can be argued like the {\sc Neighbor} query.
    \item[{\sc Random  Edge} query:] By Property-1, the degree of any vertex $v \in V$ is independent of the inputs of Alice and Bob. Alice and Bob use shared randomness to sample a vertex in $V$ proportional to its degree. Let $r \in V$ be the sampled vertex. They again use shared randomness to sample an integer $j$ in $[d(v)]$ uniformly at random. Then they determine the $j$-th neighbor of $r$ using {\sc Neighbor} query. Observe that this procedure simulates a {\sc Random Edge} query by using at most $2$ bits of communication.
\end{description}
Using the fact that $G_{({\bf x},{\bf y})}$ satisfies Property-$4$ and $5$, the output of algorithm $\cA$ determines the output of protocol $\cP$ for {\sc Find-$t/2$-Intersection}. As each query of $\cA$ can be simulated by at most two bits of communication by the protocol $\cP$, the number of bits communicated is $o(m)$. Recall that $N=s^2$ and $s=\Theta(\sqrt{m})$. So, the number of bits communicated by Alice and Bob in $\cP$ is  $o(N)$. This contradicts Theorem~\ref{theo:finding-intersection-hard}.
\end{proof}

\begin{proof}[Proof of Theorem~\ref{theo-lb-count-exact-cut}]
The proof of this theorem uses the same construction as the one used in the proof of Theorem~\ref{theo-lower-bound-exact-cut}. The `hard' communication problem to reduce from is {\sc $t/2$-Intersection} (see Definition~\ref{defi:k-intersection}) on $N$ bits, where $N=s^2$ and $s=\Theta(\sqrt{m})$.
\end{proof}

\remove{
\subsection{Proof of Theorem~\ref{theo-lb-count-exact-cut}}

\textcolor{red}{Add a figure here.}

\begin{proof}[Proof of Theorem~\ref{theo-lb-count-exact-cut}]
The construction of graph presented next is from \cite{ER18} with a different choice of parameters. \textcolor{blue}{should we also write the section and proof number in their paper?}

Assume for simplicity that $k$ even and let $s = k + \sqrt{k^2 + (m-nk)/2}$ and $N = s^2$. In particular $2k \leq s \leq 3\sqrt{m}$ and $N = \Theta(m)$.  
Let $I = \{1,\dots,\sqrt{k/2}\}$. Let $\Bar{x}$ be the substring of $x$ defined as $\Bar{x} = x[\sqrt{k/2}+1, \dots, N]$ and similarly define $\Bar{y} = y[\sqrt{k/2}+1, \dots, N]$. Thus $|\Bar{x}| = |\Bar{y}| = N - \sqrt{k/2} = \Omega(N) = \Omega(m)$. Also let $\Bar{N} = N - \sqrt{k/2}$.

For all $x,y \in \{0,1\}^N$ define the graph $G({\bf x},{\bf y}) = G(V,E)$ as follows:
\begin{itemize}
    \item Let $|V| = n$ and $V = S_A \cup T_A \cup S_B \cup T_B \cup C$ such that $|S_A| = |T_A| = |S_B| = |T_B| = s$ and $|C| = n - 4s$. Let $S_A = \{s_i^A : i \in [s]\}$ and similarly  $T_A = \{t_i^A : i \in [s]\}$, $S_B = \{s_i^B : i \in [s]\}$ and $T_B = \{t_i^B : i \in [s]\}$.
    
    \item Each vertex in $C$ is connected to $2k$
    different vertices in $S_A$.
    
    \item For $i,j \in I$, $(s_i^A, t_j^B) \in E$ and $(s_i^B, t_j^A) \in E$
   
    \item For $i,j \in [s]$ such that either $i \notin I$ or $j \notin I$, if $x_{ij} = y_{ij} = 1$, then $(s_i^A, t_j^B) \in E$ 
    and $(s_i^B, t_j^A) \in E$
   
    \item For  $i,j \in [s]$ such that either $i \notin I$ or $j \notin I$, if either $x_{ij} = 0$ or $y_{ij} = 0$, then $(s_i^A, t_j^A) \in E$ and $(s_i^B, t_j^B) \in E$
\end{itemize}
Fix $x_{ij}$ and $y_{ij}$ to $0$ for all $i,j \in I$ for the rest of this proof. For all $x,y \in \zone^{\Bar{N}}$ such that $\sum_{i = 1}^{\Bar{N}} x_iy_i = 1$ there are $k+1$ edges between $(C \cup S_A \cup T_A)$ and $(S_B \cup T_B)$ and if $\sum_{i = 1}^{\Bar{N}} x_iy_i = 0$ there are $k$ edges between $(C \cup S_A \cup T_A)$ and $(S_B \cup T_B)$. We claim that in both these cases $(C \cup S_A \cup T_A)$ and $(S_B \cup T_B)$ are the cut vertices of $G({\bf x},{\bf y})$ with cut size either $k+1$ or $k$, respectively. \textcolor{blue}{the proof that follows is identical to the last proof...maybe we can refer that and move directly to choosing the communication problem...} First, observe that degree of each vertex is at least $2k$. This is because degree of every vertex in $S_A, T_A, S_B$ and $T_A$ is $s$ and by construction degree of every vertex in $C$ is $2k$.

Next we show that every pair of vertices $(S_A \cup T_A \cup C)$ is connected by at least $3k/2$ edge disjoint paths: 
\begin{enumerate}
    \item Consider $s_i^A, s_j^A \in S_A$, for $i,j \in [s]$. Under the promise that $\sum_{i = 1}^N x_iy_i = k/2$, $s_i^A, s_j^A$ have at least $s-k \geq 3/2k$ common neighbors in $T_A$ and thus at least $3k/2$ edge disjoint paths connecting them
    
    \item Consider $s_i^A \in S_A$ and $t_j^A \in T_A$, for $i,j \in [s]$. Let $s_{j_1}^A, \dots, s_{j_{3/2k}}^A$ be $3k/2$ distinct neighbors of $t_j^A$ in $S_A$. Since, $s_i^A$ has $3k/2$ common neighbors with each $s_{j_r}^A$, $r \in [3/2k]$, there is a matching of size $3k/2$. Denote this matching by $(t_{j_r}^A, s_{j_r}^A)$, $r \in [3/2k]$. Thus $(s_i^A, t_{j_r}^A), (t_{j_r}^A, s_{j_r}^A), (s_{j_r}^A, t_{j}^A)$, for $r \in [3/2k]$, forms a set of edge disjoint paths of size $3k/2$ from $s_i^A$ to $t_j^A$, each of length $3$. In case $s_i^A$ is one of the neighbors of $t_j^A$, then one of the $3k/2$ paths gets reduced to $(s_i^A, t_{j}^A)$, a length $1$ path that is edge disjoint from the remaining paths.
    
    \item Consider $u, v \in C$. Let $u_1, \dots, u_{2k} \in S_A$ and $v_1, \dots, v_{2k} \in S_A$ be the neighbors of $u$ and $v$ respectively in $S_A$. If for some $i,j \in [2k]$, $u_i = v_j$ then $(u,u_i), (u_i,v_j), (v_j,v)$ is a desired path. Thus assume $u_i \neq v_j$ for all $i,j \in [2k]$. For all $i \in [2k]$, since $u_i$ and $v_i$ have at least $3k/2$ common neighbors in $T_A$ we can find $3k/2$ edge disjoint paths $(u_i, t_i^A), (t_i^A, v_i)$, where $t_i^A \in T^A$. Existence of $3k/2$ edge disjoint paths from $u \in C$ to $v \in S_A$ can be proved as in 1. and from $u \in C$ to $v \in T_A$ can be proved as in 2.
\end{enumerate}
Similarly, we can show that every pair of vertices in $(S_B \cup T_B)$ is connected by at least $3k/2$ edge disjoint paths. Thus, for all $x,y \in \zone^N$ such that $\sum_{i = 1}^{\Bar{N}} x_iy_i = k/2$, $G({\bf x},{\bf y})$ is a graph on $n$ vertices and $\Theta(m)$ edges such that $(C \cup S_A \cup T_A)$ and $(S_B \cup T_B)$ is the unique global min-cut of $G({\bf x},{\bf y})$ with cut size $k$. For the rest pf the proof fix $x,y \in \zone^N$ such that for $i,j \in I$ we have $x_{ij} = y_{ij} = 1$.

Choose the communication problem to be $\Intersection_{1}^{\Bar{N}} : \zone^{\Bar{N}} \times \zone^{\Bar{N}} \to \zone$ as in Definition~\ref{defi:k-intersection}. Note that inputs for $\Intersection_{1}^{\Bar{N}}$ are $x,y \in \zone^{\Bar{N}}$ such that either $\sum_{i = 1}^{\Bar{N}} x_iy_i = 1$ or $\sum_{i = 1}^{\Bar{N}} x_iy_i = 0$ and $R(\Intersection_{1}^{\Bar{N}}) = \Omega({\Bar{N}})$ for all $k = o({\Bar{N}})$. 

Fix  $x,y \in \zone^{\Bar{N}}$ such that either $\sum_{i = 1}^{\Bar{N}} x_iy_i = 1$ or $\sum_{i = 1}^{\Bar{N}} x_iy_i = 0$.  If $\Intersection_1^{\Bar{N}}({\bf x},{\bf y}) = 1$ then size of min-cut in $G({\bf x},{\bf y})$ has $k+1$ and also If
$\Intersection_1^{\Bar{N}}({\bf x},{\bf y}) = 0$ then size of min-cut in $G({\bf x},{\bf y})$ has $k$. Thus a query algorithm that decides whether the size of min-cut of $G({\bf x},{\bf y})$ is $k+1$ or $k$ with probability at least $2/3$ also computes $\Intersection_1^{\Bar{N}}({\bf x},{\bf y})$ with probability at least $2/3$.

Note that every vertex has same degree in $G({\bf x},{\bf y})$ for all $x,y \in \zone^{N}$. Thus the only queries that are useful for the query algorithm are {\sc Neighbor}, {\sc Adjacency} and {\sc Random Edge}. Each of these queries has constant communication cost. Thus we get lower bound of $\Omega(\Bar{N}) = \Omega(m)$ on the number of queries.
\end{proof}}

\section{Conclusion}
\label{sec-conclusion}

\paragraph{Global minimum $r$-way cut.}
Global minimum $r$-cut, for a graph $G=([n],E)$, $\size{V}=n$ and $\size{E}=m$, is a partition of the vertex set $[n]$ into $r$-sets $S_{1}, \, \dots, \, S_{r}$ such that the following is minimized
$$
    \left|\left\{\{i,j\}\in E \; : \; \exists k, \, \ell \, (k\neq \ell)\, \in [r],\;\mbox{with}\; i\in S_{k} \;\mbox{and}\; j \in S_{\ell}\right\}\right|.
$$
Let $\mbox{\sc Cut}_{r}(G)$ denote the set of edges corresponding to a minimum $r$-cut, i.e., the edges that goes across different partitions, and by the size of minimum $r$-cut, we mean $\size{\mbox{{\sc Cut}}_{r}(G)}$. The sampling and verification idea used in the proof of Theorem~\ref{thm-mincut-estimation-local-queries} can be extended directly, together with~\cite[Corollary~8.2]{Karger93}, to get the following result.

\begin{theo}
There exists an algorithm, with {\sc Degree} and {\sc Neighbor} query access to an unknown graph $G = ([n], E)$, that with high probability outputs a $(1\pm \eps)$-approximation of the size of the minimum $r$-cut of $G$. The expected number of queries used by the algorithm is 
$$
    \min\left\{m+n,\frac{m}{t_{r}}\right\}\mbox{poly}\left(r,\log n,\frac{1}{\eps}\right),
$$
where $t_{r} = \size{\mbox{\sc Cut}_{r}(G)}$.
\end{theo}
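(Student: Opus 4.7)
The plan is to mirror the three-part structure of the proof of Theorem~\ref{thm-mincut-estimation-local-queries}: (i) a subsampling lemma showing that when a guess $\hat{t}_r$ satisfies $\hat{t}_r \leq t_r$, every $r$-way cut of $G$ is preserved up to a factor $(1\pm\eps)$ in the random subgraph $H$ obtained by retaining each edge with probability $p=\Theta(r\log n/(\eps^2 \hat{t}_r))$; (ii) a Markov-type lemma showing that when $\hat{t}_r \gg t_r$, the minimum $r$-way cut in $H$ is zero with constant probability, which is equivalent to $H$ having at least $r$ connected components; and (iii) a guessing scheme that tries $\hat{t}_r = n/2, n/4, \ldots, 1$, uses the component count across $\Theta(\log n)$ independent samples to localize $\hat{t}_r$ to the regime where $\hat{t}_r \leq t_r$, and finally reports $|\mbox{\sc Cut}_r(H)|/p$ as the estimate.

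For part (i) I would use \cite[Corollary 8.2]{Karger93}, which states that the number of $r$-way cuts of value at most $\alpha\cdot t_r$ is at most $n^{O(\alpha r)}$. Applying Chernoff's inequality to a single $r$-way cut of value $k$ gives deviation probability $\exp(-\Omega(\eps^2 p k))=n^{-\Omega(rk/\hat{t}_r)}$, with a constant tunable through $p$. Partitioning the $r$-way cuts into bands $[jt_r,(j+1)t_r)$ and taking a union bound inside each band, the $n^{O(jr)}$ count is dominated by the per-cut deviation probability provided that the constant in $p$ is chosen large enough in terms of $r$; this is exactly where the extra factor $r$ in $p$ (compared to the $2$-way case) is needed. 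For part (ii), the expected size of the minimum $r$-way cut in $H$ is at most $p\cdot t_r$, which becomes $O(1)$ when $\hat{t}_r \geq C r \log n \cdot t_r/\eps^2$, so Markov's inequality yields $|\mbox{\sc Cut}_r(H)|=0$ with constant probability, and standard amplification over $\Theta(\log n)$ independent trials upgrades this to high probability.

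The guessing scheme is identical in spirit to {\sc Estimator}$(\eps)$: after the first guess passes the ``at least $r$ connected components appear in fewer than half of the trials'' check, we shrink $\hat{t}_r$ by the polynomial factor $\Theta(r\log n/\eps^2)$ and re-sample; by the two lemmas above, the shrunk guess lies below $t_r$, so $|\mbox{\sc Cut}_r(H)|/p$ is a $(1\pm\eps)$-approximation with high probability. The total expected query cost is dominated by the final two sampling rounds at this shrunk guess and is $\widetilde{\Oh}(m/t_r)\cdot\mathrm{poly}(r,\log n,1/\eps)$ by Lemma~\ref{lem:sample_dp_lemma}, matching the claimed bound. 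As before, if $m<n-1$ or any isolated vertex appears among the $n$ degree queries we abort early with $t_r$-value $0$, giving the $\min\{m+n,m/t_r\}$ form.

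The main obstacle is the growth of Karger's $r$-way cut count: since the exponent $n^{O(jr)}$ scales with both the band index $j$ and the multiplicity $r$, the subsampling probability $p$ must absorb a factor linear in $r$ to keep the Chernoff bound beating the union bound. Getting the constants right, so that the same $p$ simultaneously (a) yields the $(1\pm\eps)$ approximation when $\hat{t}_r\leq t_r$ and (b) produces disconnection into $r$ pieces when $\hat{t}_r$ is much larger, is where all the care is needed; everything else is a direct transcription of the proofs of Lemmas~\ref{lem:karger}, \ref{lem:highguess1}, \ref{lem:guess} and Theorem~\ref{thm-mincut-estimation-local-queries}. A minor secondary point is that the ``verification'' step is no longer a trivial connectivity test but a minimum $r$-way cut computation on the sampled graph; since this uses no queries, any known polynomial-time algorithm (e.g.\ the Karger--Stein contraction algorithm for $r$-way cuts) suffices.
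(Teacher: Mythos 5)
Your three-part plan is a faithful expansion of the sketch the paper itself gives for this result: the paper states the estimation algorithm ``can be extended directly, together with~\cite[Corollary~8.2]{Karger93},'' and your structure --- a subsampling lemma driven by Karger's $n^{O(\alpha r)}$ bound on the number of $r$-way cuts within an $\alpha$ factor of optimum, a Markov-based rejection test keyed to whether $H$ has at least $r$ connected components (equivalently, $\size{\mbox{\sc Cut}_r(H)}=0$), and the same halving guessing scheme with a $\kappa_r=\Theta(r\log n/\eps^2)$ shrink step --- is exactly that extension. You correctly locate the only new technical point: since the count of $r$-way cuts in the $j$-th band is $n^{O(jr)}$ while each such cut deviates with probability $e^{-\Omega(\eps^2 p k)}\leq n^{-\Omega(rj)}$ once $p=\Theta(r\log n/(\eps^2\hat t_r))$, the extra factor of $r$ in $p$ is what lets the per-cut failure probability beat the union bound, band by band, just as in the proof of Lemma~\ref{lem:karger}. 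The amplification over $\Theta(\log n)$ independent samples and the final estimate $\size{\mbox{\sc Cut}_r(H)}/p$ mirror Lemma~\ref{lem:guess} and {\sc Estimator}($\eps$).

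There is one genuine slip, though it is easily repaired. The line ``As before, if $m<n-1$ or any isolated vertex appears among the $n$ degree queries we abort early with $t_r$-value $0$'' is incorrect for $r\ge 3$: such a graph is disconnected, so its $2$-way minimum cut is $0$, but it may well have fewer than $r$ connected components, in which case $t_r>0$ and reporting $0$ is wrong. The fix is to drop this abort altogether. The claimed query bound still holds: whenever $t_r>0$, the $n$ degree queries are dominated by $\frac{m}{t_r}\,\mathrm{poly}(r)$, because taking the $r-1$ smallest-degree vertices as singletons and the remaining $n-r+1$ vertices as one block exhibits an $r$-way cut of size at most $2(r-1)m/n$, so $t_r\le 2(r-1)m/n$ and hence $n\le 2(r-1)m/t_r$. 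When $t_r=0$, the guessing loop runs down to $\hat t_r=1$ at cost $\widetilde{\Oh}(m)+n\leq(m+n)\,\mathrm{poly}(r,\log n,1/\eps)$. So the $\min\{m+n,\,m/t_r\}$ form survives without the abort, and this also cleanly handles graphs that are disconnected but not $r$-disconnected.
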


\paragraph{Minimum cuts in simple multigraphs.}
A graph with multiple edges between a pair of vertices in the graph but without any self loops are called {\em simple multigraphs}. If we have {\sc Degree} and {\sc Neighbor}\footnote{For simple multigraphs, we will assume that the neighbors of a vertex are stored with multiplicities.} query access to simple multigraphs then we can directly get the following generalization of Theorem~\ref{thm-mincut-estimation-local-queries}.
\begin{theo}{\bf (Minimum cut estimation in simple multigraphs using local queries)} 
\label{thm-multigraphs-mincut-estimation-local-queries}
There exists an algorithm, with {\sc Degree} and {\sc Neighbor} query access to an unknown simple multigraph $G = (V, E)$, that solves the minimum cut estimation problem with high probability. The expected number of queries used by the algorithm is 
$$
    \min\left\{m+n,\frac{m}{t}\right\}\mbox{poly}\left(\log n,\frac{1}{\eps}\right),
$$
where $n$ is the number of vertices in the multigraph, $m$ is the number of edges in the multigraph and $t$ is the number of edges in a minimum cut.
\end{theo}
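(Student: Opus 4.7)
The plan is to show that the entire machinery developed for Theorem~\ref{thm-mincut-estimation-local-queries} transfers to simple multigraphs essentially verbatim, with only a cosmetic change in how the adjacency lists are interpreted. Since the {\sc Neighbor} oracle stores neighbors with multiplicities, each parallel copy of an edge appears as a distinct entry in the adjacency list and can be treated as a distinguishable edge for the purposes of sampling. I would reuse {\sc Sample}($D,p$), {\sc Verify-Guess}($D,\hat{t},\eps$) and {\sc Estimator}($\eps$) unchanged, and re-verify the three key lemmas (Lemmas~\ref{lem:sample_dp_lemma},~\ref{lem:karger} and~\ref{lem:highguess1}) in the multigraph setting.

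First, I would observe that {\sc Sample}($D,p$) still produces a random subgraph $H$ in which every edge of the multigraph is included independently with probability $p$: each multi-edge $e=(i,r_j)$ is visited exactly twice when we iterate through the adjacency lists of its two endpoints (once for each orientation), and including it with probability $q=1-\sqrt{1-p}$ at each visit yields an overall inclusion probability of $1-(1-q)^2 = p$, exactly as in the simple-graph proof. The expected number of {\sc Neighbor} queries remains at most $2pm$ where $m$ now counts edges with multiplicities.

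Next, I would invoke Karger's cut-counting bound (Lemma~\ref{lem:karger_cut_bound}) for multigraphs; Karger originally proved it via the random edge-contraction algorithm, which is stated and analyzed directly for multigraphs, so no adaptation is needed. Armed with this, the Chernoff-plus-union-bound argument of Lemma~\ref{lem:karger} goes through word for word, because the size of any fixed cut $\cC_H(A_1,A_2)$ is a sum of independent Bernoulli random variables (one per multi-edge crossing the cut) with mean $pk$. Likewise, Markov's inequality in Lemma~\ref{lem:highguess1} uses only the expectation of the crossing multi-edges and needs no change. Consequently Lemma~\ref{lem:guess} holds verbatim for simple multigraphs, as does Lemma~\ref{thm:main} which drives the query analysis of {\sc Estimator}($\eps$).

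The main ``obstacle'' to flag is purely a bookkeeping one: ensuring that the {\sc Neighbor}-query iteration in {\sc Sample} touches every multi-edge the right number of times and that the induced-edge probability calculation $1-(1-q)^2=p$ remains valid when parallel edges are present. Since each parallel copy has its own slot in both incident adjacency lists, this is immediate. All other ingredients — Karger's bound, Chernoff, Markov, the geometric guessing scheme on $\hat{t}$, and the connectivity-based rejection test — depend only on the statistical structure of independent edge sampling and not on the absence of parallel edges. Therefore Theorem~\ref{thm-multigraphs-mincut-estimation-local-queries} follows by an immediate reinterpretation of the proof of Theorem~\ref{thm-mincut-estimation-local-queries}, with $m$ now denoting the number of edges counted with multiplicities.
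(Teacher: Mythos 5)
Your high-level strategy --- reuse {\sc Sample}, {\sc Verify-Guess}, and the Karger/Chernoff/Markov machinery --- is the right one and matches the paper's (essentially proofless) ``directly generalizes'' intent, but there is a concrete gap in the claim that {\sc Estimator}$(\eps)$ and Lemma~\ref{thm:main} carry over ``word for word.'' The geometric search is initialized at $\hat{t}=n/2$ (and the {\sc Verify-Guess} input specification demands $1\le\hat{t}\le n/2$); both encode the simple-graph bound $t\le n-1$. In a simple multigraph $t$ can exceed $n$ by an arbitrary factor (e.g.\ $K_n$ with every edge replicated $k$ times has $t=k(n-1)$). In that regime the very first call already has $\hat{t}=n/2\le t$ and accepts, and {\sc Estimator} then calls {\sc Verify-Guess} with $\hat{t}_u=\max\{(n/2)/\kappa,1\}$, which samples with $p=\widetilde{\Theta}(1/\hat{t}_u)=\widetilde{\Theta}(1/n)$ rather than $\widetilde{\Theta}(1/t)$. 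The expected query cost is then $\widetilde{O}(m/n)$, exceeding the claimed $\widetilde{O}(m/t)$ by a factor of $t/n$, which is not bounded by any $\mathrm{poly}(\log n,1/\eps)$. Correctness survives this, but the query bound does not.

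The fix is a one-line change using data the algorithm already has: after computing $D$, initialize $\hat{t}$ at $d_{\min}=\min_i d(i)$ instead of $n/2$ (recall $t\le d_{\min}$, since a single-vertex cut is a cut, and also $d_{\min}\le 2m/n$). The halving then reaches the interval $(t/2,t]$ and the per-iteration costs $\widetilde{O}(m/\hat{t})$ form a geometric series summing to $\widetilde{O}(m/t)$, after which the argument of Lemma~\ref{thm:main} goes through as you describe. You should state this modification explicitly rather than asserting {\sc Estimator} is reused unchanged. A smaller point worth a sentence: the per-edge inclusion calculation $1-(1-q)^2=p$ implicitly assumes that the two adjacency-list slots corresponding to a given parallel copy (one at each endpoint) are paired consistently; in the simple-graph case this was automatic because $E_p$ is a plain set, but for the multiset $E_p$ it relies on the natural convention that the $i$-th copy of $v$ in $u$'s list matches the $i$-th copy of $u$ in $v$'s list.
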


\bibliographystyle{alpha}
\bibliography{reference.bib}

\appendix

\section{Probability Results}
\label{sec:prob}

\begin{lem} [See~\cite{Dubhashi09}]
\label{lem:chernoff}
Let $X = \sum_{i \in [n]} X_i$ where $X_i$, $i \in [n]$, are independent random variables,  $X_i \in [0,1]$ and $\mathbb{E}[X]$ is the expected value of $X$. Then
\begin{itemize}
\item[(i)] \label{ch01} For $\epsilon > 0$

	$\Pr [ |X - \mathbb{E}[X] | > \epsilon \mathbb{E} [X] ] \leq \exp{ \left(- \frac{\epsilon^2}{3} \mathbb{E}[X]\right)}.$
 
\remove{\item[(ii)] \label{ch02} if $t > 2 \mathbb{E}[X]$, then 
\begin{eqnarray}
	\Pr[ X > t ] \leq 2^{-t}
\end{eqnarray}}
\item[(ii)] \label{ch03} Suppose $\mu_L \leq \mathbb{E}[X] \leq \mu_H$, then for $0 < \epsilon < 1$
\begin{itemize}
\item[(a)] $\Pr[ X > (1+\epsilon)\mu_H] \leq \exp{\left( - \frac{\epsilon^2}{3} \mu_H\right)}$.
\item[(b)] $\Pr[ X < (1-\epsilon)\mu_L] \leq \exp{\left( - \frac{\epsilon^2}{2} \mu_L\right)}$.
\end{itemize}
\end{itemize}
\end{lem}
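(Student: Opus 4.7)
The plan is to use the standard Chernoff--Hoeffding approach: bound each tail by Markov's inequality applied to a moment generating function, factorise via independence, control each factor using the convexity of $e^{\lambda x}$ on $[0,1]$, and then optimise the free parameter $\lambda$. Part (i) will drop out of the two one-sided bounds in (ii) by specialising and taking a union bound.

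For the upper tail (ii)(a), for any $\lambda > 0$,
\[
  \Pr[X > (1+\epsilon)\mu_H] \leq e^{-\lambda(1+\epsilon)\mu_H}\,\mathbb{E}[e^{\lambda X}].
\]
Independence of the $X_i$'s gives $\mathbb{E}[e^{\lambda X}] = \prod_i \mathbb{E}[e^{\lambda X_i}]$; convexity of $e^{\lambda x}$ on $[0,1]$ yields $\mathbb{E}[e^{\lambda X_i}] \leq 1 + \mathbb{E}[X_i](e^\lambda-1) \leq e^{\mathbb{E}[X_i](e^\lambda-1)}$. Multiplying and using $\mathbb{E}[X] \leq \mu_H$ together with $e^\lambda - 1 > 0$ gives $\mathbb{E}[e^{\lambda X}] \leq e^{\mu_H(e^\lambda-1)}$. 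Choosing $\lambda = \ln(1+\epsilon)$ is optimal, producing $\bigl(e^\epsilon/(1+\epsilon)^{1+\epsilon}\bigr)^{\mu_H}$, which simplifies to $\exp(-\epsilon^2\mu_H/3)$ via the elementary inequality $(1+\epsilon)\ln(1+\epsilon) - \epsilon \geq \epsilon^2/3$ valid on $\epsilon \in (0,1)$.

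For the lower tail (ii)(b), I would apply Markov symmetrically to $e^{-\lambda X}$ with $\lambda > 0$: $\Pr[X < (1-\epsilon)\mu_L] \leq e^{\lambda(1-\epsilon)\mu_L}\,\mathbb{E}[e^{-\lambda X}]$. The same convexity estimate yields $\mathbb{E}[e^{-\lambda X_i}] \leq e^{\mathbb{E}[X_i](e^{-\lambda}-1)}$; since $e^{-\lambda}-1 < 0$ and $\mathbb{E}[X] \geq \mu_L$, multiplying across $i$ gives $\mathbb{E}[e^{-\lambda X}] \leq e^{\mu_L(e^{-\lambda}-1)}$. Setting $\lambda = -\ln(1-\epsilon)$ and invoking the companion inequality $(1-\epsilon)\ln(1-\epsilon) + \epsilon \geq \epsilon^2/2$ on $(0,1)$ gives the stated bound $\exp(-\epsilon^2\mu_L/2)$. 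Part (i) then follows by specialising (a) and (b) to $\mu_L = \mu_H = \mathbb{E}[X]$ and taking a union bound over the two tails, absorbing the factor $2$ into the constant $1/3$ in the exponent.

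The only step that is not purely mechanical, and hence the main obstacle, is proving the two real-analytic inequalities $(1+\epsilon)\ln(1+\epsilon) - \epsilon \geq \epsilon^2/3$ and $(1-\epsilon)\ln(1-\epsilon) + \epsilon \geq \epsilon^2/2$ on $(0,1)$. In each case I would write the difference as $f(\epsilon)$, verify $f(0) = f'(0) = 0$, and check that $f'$ stays non-negative on $(0,1)$ by a one-line derivative computation; all remaining steps are routine algebraic manipulations of the MGF bound.
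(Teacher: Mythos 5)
The paper does not supply a proof of this lemma; it is a textbook Chernoff bound cited directly from Dubhashi--Panconesi, so there is no in-paper argument to compare against. On its own terms your derivation of (ii)(a) and (ii)(b) via the moment-generating function, the convexity bound $\mathbb{E}[e^{\lambda X_i}] \leq e^{\mathbb{E}[X_i](e^\lambda-1)}$, the optimal choices $\lambda = \ln(1+\epsilon)$ and $\lambda = -\ln(1-\epsilon)$, and the two elementary inequalities is the standard route and is correct for $\epsilon\in(0,1)$.

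The derivation of (i) from (ii) has a genuine gap. Specialising (a) and (b) to $\mu_L=\mu_H=\mathbb{E}[X]$ and taking a union bound gives
\[
\Pr\bigl[\,|X-\mathbb{E}[X]| > \epsilon\,\mathbb{E}[X]\,\bigr] \;\leq\; e^{-\epsilon^2\mathbb{E}[X]/3} + e^{-\epsilon^2\mathbb{E}[X]/2} \;\leq\; 2\,e^{-\epsilon^2\mathbb{E}[X]/3},
\]
and the factor of $2$ cannot be ``absorbed into the constant $1/3$'': $1/3$ is already the weaker of the two exponents, and $2e^{-\epsilon^2\mathbb{E}[X]/3} \geq e^{-\epsilon^2\mathbb{E}[X]/3}$ always, so the union bound provably delivers something strictly weaker than (i) as displayed. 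In fact (i) as printed is slightly too strong to be literally true: with $n=1$, $X_1\sim\mathrm{Bernoulli}(1/2)$ and $\epsilon=0.9$ the left-hand side equals $1$ while the right-hand side is $e^{-0.135}<1$; the standard two-sided Chernoff bound carries an explicit factor of $2$. Separately, (i) is asserted for all $\epsilon>0$, but your lower-tail step needs $\epsilon<1$ for $\lambda=-\ln(1-\epsilon)$ to be defined, and the inequality $(1+\epsilon)\ln(1+\epsilon)-\epsilon\geq\epsilon^2/3$ already fails at $\epsilon=2$ (where $3\ln 3 - 2 \approx 1.296 < 4/3$). None of this affects the paper, which invokes the lemma only with $\epsilon<1$ and where a constant factor in front is immaterial, but your proposal should either restate (i) with the factor $2$ and the restriction $\epsilon\in(0,1)$, or drop the claim that (a)+(b) yields (i) verbatim.
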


\end{document}